\DeclareMathOperator{\sech}{sech}
\def\Ad{\mathrm{Ad}}
\newcommand{\inv}[0]{{-1}}
\newcommand{\oo}[0]{\otimes}
\newcommand{\gotha}{\mathfrak a }
\newcommand{\gothg}{\mathfrak g }
\newcommand{\gothh}{\mathfrak h }
\newcommand{\RR}{\mathbb{R}}
\newcommand{\CC}{\mathbb{C}}
\newcommand{\al}{\alpha}
\newcommand{\bb}{\beta}
\newcommand{\ga}{\gamma}
\newtheorem{theorem}{Theorem}[section]
\newtheorem{lemma}[theorem]{Lemma}
\newtheorem{definition}[theorem]{Definition}
\newcommand{\issue}[2][]{%
  \textcolor{red}{#2}%
  \marginnote{\textbf{\textcolor{red}{issue}}%
    \ifthenelse{\equal{#1}{}}{}{: #1}}}
\def\g                {{\mathfrak g}}
\def\C                {\mathbb C}
\def\id               {{\rm id}}
\def\kk               {{\rm k}}
\long\def\labl#1      {\label{#1}\ee}
\def\R                {\mathbb R}
\def\Z                {\mathbb Z}
\def\k{\kappa}                  
\def\kk{\omega}
\def\conm#1#2{\left[ #1,#2 \right]}  
\def\pois#1#2{\left\{ #1,#2 \right\}}  
\def\1{\'{\i}}
\def\hbar{\mathchar '26\mkern -9muh}
\def\CC{\text{\ \!C}}       
\def\SS{\text{\ \!S}}        
\def\ea{{\rm e}}
 \def\m{{\eta}}  
\def\C{{\Upsilon}}
\def\back{\!\!\!\!\!\!\!}            
\def\mback{\!\!\!\!}
\newcommand\be{\begin{equation}}
\newcommand\ee{\end{equation}}
\newcommand\bea{\begin{eqnarray}}
\newcommand\eea{\end{eqnarray}}
\begin{document}

\ \bigskip\bigskip

\begin{center}
\baselineskip 24 pt {\LARGE \bf  
AdS Poisson homogeneous spaces and \\
Drinfel'd doubles}

\end{center}

 \medskip

\begin{center}

{\sc Angel  Ballesteros$^1$, Catherine Meusburger$^2$, Pedro Naranjo$^1$}

{$^1$ Departamento de F\1sica, Universidad de Burgos, 
E-09001 Burgos, Spain}

{$^2$  Department Mathematik,  FAU Erlangen-N\"urnberg, Cauerstr.~11, D-91058 Erlangen, Germany
}
 
e-mail: {angelb@ubu.es, catherine.meusburger@math.uni-erlangen.de, pnaranjo@ubu.es}

\end{center}

\begin{abstract}
The correspondence between Poisson homogeneous spaces over a Poisson-Lie group $G$ and Lagrangian Lie subalgebras of  the classical double $D(\gothg)$ is revisited and explored in detail for the case in which $\gothg=D(\mathfrak a)$ is a classical double itself. We apply these results to give an explicit  description of some coisotropic 2d Poisson homogeneous spaces over the group $\mathrm{SL}(2,\R)\cong\mathrm{SO}(2,1)$, namely 2d anti de Sitter space, 2d hyperbolic space and the lightcone in 3d Minkowski space.  We show how each of these spaces is obtained as 
a quotient with respect to a Poisson-subgroup for  one of the three inequivalent Lie bialgebra structures on $\mathfrak{sl}(2,\R)$ and as a coisotropic one for the others. 

We then construct families  of  coisotropic Poisson homogeneous structures for 3d anti de Sitter space $\mathrm{AdS}_3$ and show that the ones that are quotients by a Poisson subgroup are determined by a three-parameter family of classical $r$-matrices for $\mathfrak{so}(2,2)$, while the {non Poisson-subgroup} cases are much more numerous. In particular, we present the two Poisson homogeneous structures on $\mathrm{AdS}_3$ that arise from two Drinfel'd double structures on  $\mathrm{SO}(2,2)$. The first one realises $\mathrm{AdS}_3$  as a quotient of $\mathrm{SO}(2,2)$ by the Poisson-subgroup
$\mathrm{SL}(2,\R)$, while the second one,   the non-commutative spacetime of the twisted $\kappa$-AdS deformation, realises $\mathrm{AdS}_3$ as a  coisotropic Poisson homogeneous space. 
\end{abstract}

\medskip 

\noindent

\noindent
KEYWORDS:  Homogeneous spaces, Poisson--Lie groups, anti-de Sitter, cosmological constant, quantum groups, non-commutative  spacetimes.



\section{Introduction}

Non-commutative spacetimes are widely believed to provide a suitable framework for the description of fundamental properties of spacetime when quantum gravity effects are taken into consideration, see for instance~\cite{Snyder, Woronowicz, Maggiore, FredenCMP, CKNT} and references therein.
Mathematically, they are given as  (co)module (co)algebras over  quantum groups and the associated sub-(co)algebras of (co)invariants. 
A prominent example are the $q$-Minkowski spacetimes arising from quantum Poincar\'e symmetries whose  properties have been thoroughly studied in~\cite{kMas, kMR, kZakr,kappaP,nullplane}. However, much less is known about their  counterparts for non-vanishing cosmological constant, quantum de Sitter and anti de Sitter space (see~\cite{kappa31} and references therein). 
This is regrettable since they could serve as  mathematical models of quantum spacetimes with non-vanishing curvature and with possible cosmological implications, see~\cite{BHBruno, Starodutsev, Marciano, iceCUBE}.

From a classical perspective, it is well known that $n$-dimensional  Minkowski, de Sitter and anti de Sitter spacetimes are all homogeneous spaces: they are given as quotients $G/H$  of their their isometry group $G$ with respect to their isotropy subgroup $H=\mathrm{SO}(n-1,1)$. 
It is therefore natural to investigate the associated quantum homogeneous spacetimes, which should arise from the associated quantum groups $G_q$  together with a quantum analogue $H_q$ of their isotropy subgroup.  We recall that in the quantum group setting the relevant notion is the Hopf algebra of (non-commutative) functions on the quantum homogeneous space, and all its defining classical properties (transitivity of the action, invariance of a point with respect to a subgroup) have generalisations in the quantum group framework. 
A detailed presentation on quantum homogeneous spaces can be found, for instance, in~\cite{Koor}, where it is stressed that imposing $H_q$ to be a quantum subgroup, i.~e.~a Hopf subalgebra  $H_q\subset G_q$, turns out to be too restrictive. In order to obtain a quantum algebra  that can be viewed as the quantum counterpart of the algebra of functions  on $G/H$  one needs to consider also the {so-called {\em coisotropic}} subgroups since in many relevant 
quantum groups $G_q$ the quantisation of the subgroup $H\subset G$  is not a quantum subgroup. This foundational  issue has been treated in a number of works such as~\cite{EK, Brzezinski, NT, CY1, CY2, Yamashita, KST}, and several examples of quantum homogeneous spaces have been explicitly constructed, see~\cite{Podles, NM, VS, Podles9, Sheu, Ciccoliqplanes, Leit, BCGSTqse, CHZ, HMS, BRV, Tomatsu} and references therein.

However, Lorentzian quantum homogeneous spacetimes in (2+1) and (3+1) dimensions have not been constructed yet -probably due to their non-compact nature and mathematical complexity- although this  would be highly relevant for applications  in quantum gravity. 
At this point, it is worth recalling that in the same manner as quantum groups can be thought of as Hopf algebra counterparts of Poisson-Lie groups, quantum homogeneous spaces can be understood as quantisations of Poisson homogeneous spaces, for an overview see~\cite{Ciccoliqplanes, Zakrzewski, Zakrr1+1, Reyman, Karolinski, Ludrm, Lu, Ciccoli, KS, BCGST, CGDoc} and references therein.  
While  much  simpler on a computational level, these Poisson homogeneous spaces still carry relevant information about the associated quantum homogeneous spaces and can be viewed as their semiclassical limits. 
{In this correspondence, {\em coisotropic} quantum homogeneous spaces  correspond to  {\em coisotropic} Poisson homogeneous spaces.} 

The corresponding structures on the Lie bialgebra level  have been identified in \cite{DrHS} as Lagrangian Lie subalgebras of the classical double $D(\gothg)$. More precisely, it was shown in \cite{DrHS} that the Poisson homogeneous structures on a quotient $G/H$ of a Poisson-Lie group $G$ by a Lie subalgebra $H\subset G$ correspond to orbits of a natural $G$-action  on the algebraic variety of Lagrangian subalgebras of the classical double $D(\gothg)$.  
They 
 can be viewed as the linear or  lowest order approximation of a quantum homogeneous space {at a fixed point} and contain all essential algebraic information for the construction of the  quantum homogeneous space.  Hence, they provide a framework in 
 which  essential properties of Lorentzian quantum homogeneous spacetimes  can be investigated, thus opening the path for their explicit computations in  physically meaningful cases.  {In this correspondence,  {\em coisotropic} Lagrangian subalgebras describe {\em coisotropic} Poisson homogeneous spaces}

In addition to their physics applications in four dimensions, quantum homogeneous spaces and their semiclassical counterparts are also relevant in the context of 3d gravity. On one hand, 3d gravity is widely used as a toy model for the higher-dimensional case and as a testing ground for quantisation approaches. On the other hand, the role of quantum group symmetries is much more transparent in this setting since they arise as the quantum counterparts of Poisson-Lie symmetries in the classical theory.  In particular, it was shown in \cite{bn,br,cmt}  that gauge fixing procedures and the introduction of an observer into the theory lead to the appearance of  dynamical classical $r$-matrices and hence to  Poisson homogeneous spaces.

In  this paper we pursue this  ``first order approach''  to  low dimensional {coisotropic} quantum homogeneous spacetimes, starting with the  Lie bialgebra  $(\gothg,\delta)$ and the  Poisson-Lie group $G$ for quantum group of symmetries   $G_q$. This infinitesimal approach turns out to be the simplest framework in order to understand the plurality of {coisotropic} Poisson homogeneous spaces and hence of {coisotropic} quantum homogeneous spaces that correspond to a given homogeneous space $G/H$. We  examine under this  perspective the construction of  {coisotropic}  2d Poisson homogeneous spaces associated  with the Lorentz group $\mathrm{SO}(2,1)\cong \mathrm{SL}(2,\R)$ in three dimensions  as well as the construction of three-dimensional anti de Sitter space $\mathrm{AdS}_3$ as a {coisotropic} Poisson homogeneous space over its isometry group $\mathrm{SO}(2,2)$.

We show that the three inequivalent 2d homogeneous spaces over $\mathrm{SO}(2,1)\cong\mathrm{SL}(2,\R)$,  2d hyperbolic space, anti de Sitter space and the lightcone in 3d Minkowski space, naturally
correspond  to the three inequivalent bialgebra structures on $\mathfrak{sl}(2,\R)$.  The associated Poisson-Lie structures  allow one to realise one of them as a Poisson homogeneous  space $\mathrm{SL}(2,\R)/H$ with respect to a Poisson-subgroup $H\subset \mathrm{SL}(2,\R)$, while the other two are coisotropic. In all cases, we give an explicit  parametrisation of the Poisson homogeneous spaces in coordinates adapted to their geometry. 

In the case of 3d anti de Sitter space 
$\mathrm{AdS}_3=\mathrm{SO}(2,2)/\mathrm{SO}(2,1)\cong \mathrm{SL}(2,\R)\times\mathrm{SL}(2,\R)/\mathrm{SL}(2,\R)$,
the relevant isotropy subgroup is the group $\mathrm{SL}(2,\R)$,  and there are numerous Poisson-Lie structures on the group $\mathrm{SO}(2,2)$
that give $\mathrm{AdS}_3$ the structure of a {coisotropic} Poisson homogeneous space over $\mathrm{SO}(2,2)$. 
We  consider two representative Lie bialgebra structures on $\mathfrak{so}(2,2)$ that are canonical in the sense that they are the two possible
realisations of $\mathfrak{so}(2,2)$ as a classical double
$\mathfrak{so}(2,2)=D(\mathfrak a)$~\cite{BHMplb1, BHMcqg, BHMplb2, BHMNsigma}. This choice is motivated by
their applications in 3d gravity, where Poisson-Lie structures arise naturally from the description of 3d gravity as a Chern-Simons gauge theory \cite{Witten,AT}  whose symplectic structure can be described in terms of Poisson-Lie structures \cite{FR,AM}. It is argued in \cite{cm2, bernd1} that the natural Poisson-Lie structures for 3d gravity are classical doubles.  For both of these Lie bialgebra structures, we construct the associated Poisson homogeneous space and give an explicit description in coordinates. In the first case, we obtain one of very few descriptions of anti de Sitter space as  a Poisson homogeneous space of the Poisson subgroup type, while the second, that corresponds to the
 twisted $\kappa$-deformation,  is a coisotropic Poisson homogeneous space. 

The structure of the paper is the following. In Section 2 we review the construction of Poisson homogeneous spaces over a Poisson-Lie group $G$ and Drinfel'd's {description} of Poisson homogeneous spaces \cite{DrHS} in terms of Lagrangian Lie  subalgebras  of the classical $D(\gothg)$. 
We give an explicit description in terms of a basis, comment on the special cases of coisotropic and Poisson-subgroup homogeneous spaces  and  explore in detail the case where $\gothg$ is itself a classical double $\gothg=D(\mathfrak a)$ and the homogeneous space is $G/A$.
In Section 3 we construct {coisotropic} Poisson homogeneous spaces over the  group $\mathrm{SL}(2,\R)$ with respect to the three inequivalent Poisson-Lie structures on $\mathrm{SL}(2,\R)$. We {derive} an explicit description of the Poisson structure in terms of coordinates, analyse the structure of the Poisson brackets and their first order approximation. In Section 4 we construct three-dimensional anti de Sitter space as a {coisotropic} Poisson homogeneous space over its isometry group $\mathrm{SO}(2,2)$ with respect to the isotropy subgroup $\mathrm{SO}(2,1)\subset \mathrm{SO}(2,2)$. 
We start by analysing the quasitriangular Lie bialgebra structure on $\mathfrak{so}(2,2)$.  We show that  there are many  Poisson-Lie structures on $\mathfrak{so}(2,2)$ for which $\mathfrak{so}(2,1)$ defines a {coisotropic} Lagrangian Lie subalgebra of the double $D(\mathfrak{so}(2,2))$ but only a
three-parameter family of classical $r$-matrices that gives $\mathrm{SO}(2,1)$ the structure of a Poisson subgroup of $\mathrm{SO}(2,2)$. In Sections 4.2 and 4.3 we then construct the coisotropic Poisson homogeneous structures on $\mathrm{AdS}_3$ that are associated with the two classical double structures $\mathfrak{so}(2,2)=D(\mathfrak a)$. In both cases, we give an explicit parametrisation of these Poisson structures in terms of coordinates and analyse their 
limit of vanishing cosmological constant  as well as their linearisation. Section 5 contains concluding remarks and closes the paper.


\section{Poisson homogeneous spaces}

In this section we summarise the basics on  Poisson-Lie groups and Poisson homogeneous spaces  and recall the result by Drinfel'd \cite{DrHS}, which states that Poisson homogeneous structures on the quotient $G/H$ of a Poisson-Lie group $G$ with respect to a Lie subgroup $H\subset G$ correspond to the orbits of a natural $G$-action  on the  variety of Lagrangian
subalgebras of the  classical double $D(\gothg)$. Basic references on  Poisson homogeneous spaces are~\cite{DrHS, Zakrzewski, Reyman, Karolinski, KS, CGDoc}. 

\subsection{Poisson homogeneous spaces over $G$  and Lagrangian Lie subalgebras of $D(\gothg)$}

Recall  that a {\em homogeneous space} over a Lie group $G$ is a smooth manifold $M$ together with a transitive smooth group action $\rhd: G\times M\to M$. A {\em homomorphism of homogeneous spaces} is a smooth map $\phi: M\to N$  that intertwines the $G$-actions on $M$ and $N$, i.~e.~that satisfies   $g\rhd \phi(m)=\phi(g\rhd m)$ for  $g\in G$, $m\in M$.  If $\phi$ is a diffeomorphism, then it is called an {\em isomorphism of homogeneous spaces}.

Every Lie subgroup $H\subset G$ gives rise to a homogeneous space $G/H$ with the 
 group action $\rhd: G\times G/H\to G/H$,  $g\rhd (uH)=(gu)H$.  
 If  $H'=gHg^\inv$ 
 , then $\phi_{g}: G/H\to G/H'$, $uH\mapsto (ug^\inv) H'$ is an isomorphism of homogeneous spaces.
 
Conversely, if $M$ is a homogeneous space over $G$, then  the stabiliser $H_m=\{g\in G:\, g\rhd m=m\}$ of each point  $m\in M$  is a Lie subgroup of $G$. As the $G$-action on $M$  is transitive,   the  map $\rhd_m: G\to M$, $g\mapsto g\rhd m$  induces an isomorphism of homogeneous spaces 
$\phi_m: G/H_m\to M$, $uH_m\mapsto u\rhd m$. For any   $m'\in M$, there is a $g\in G$ with 
 $m'=g\rhd m$. This implies  $H_{m'}=gH_mg^\inv$ and that the 
 map $\phi_g: M\to M$, $m\mapsto g\rhd m$ is an isomorphism of homogeneous spaces. 
 
The concept of a {\em Poisson} homogeneous space over a {\em Poisson-Lie group} $G$ is obtained by requiring that all manifolds in the definition of a homogeneous space are Poisson manifolds and all smooth maps Poisson maps.

\begin{definition} \label{def:poissohom}Let $G$ be a Poisson-Lie group. 
\begin{enumerate}
\item A {\bf Poisson homogeneous space} over $G$ is a Poisson manifold $M$ with a transitive group action $\rhd: G\times M\to M$, $(g,m)\mapsto g\rhd m$ that is a Poisson map with respect to the Poisson structure on $M$ and the product of the Poisson structures on $G$ and $M$.

\item  A {\bf homomorphism of Poisson homogeneous spaces} over $G$ is a Poisson map $\phi: M\to N$ that intertwines the $G$-actions on $M$ and $N$: $g\rhd\phi(m)=\phi(g\rhd m)$ for all $g\in G$ and $m\in M$. 
\item An {\bf isomorphism of Poisson homogeneous spaces} is a homomorphism of Poisson homogeneous spaces that  is a diffeomorphism. 
\end{enumerate}
\end{definition}

Note that by definition every Poisson-Lie group $G$ is a Poisson homogeneous space over itself with respect to its action on itself by left multiplication. 
Explicitly, the condition that the action $\rhd: G\times M\to M$ is a Poisson map reads
\begin{align}\label{eq:pmap}
\{f,h\}(g\rhd m)=\{f\circ \rhd_g, h\circ \rhd_g\}(m)+\{f\circ \rhd_m, h\circ\rhd_m\}_G(g),
\end{align}
where $\{\,,\,\}$ denotes the Poisson bracket on $M$,  $\{\,,\,\}_G$  the Poisson bracket on $G$ and \begin{align}
\rhd_m: G\to M, \quad g\mapsto g\rhd m \qquad  \rhd_g: M\to M, \quad m\mapsto g\rhd m.
\end{align}
Poisson homogeneous spaces were classified in \cite{DrHS} via the associated infinitesimal structures, namely Lagrangian Lie subalgebras of the Drinfeld double $D(\gothg)$. The notion of a Lagrangian Lie subalgebra is obtained from the fact that  $D(\gothg)=\gothg\oplus\gothg^*$ as a vector space and the pairing between $\gothg$ and $\gothg^*$.

\begin{definition}\label{def:lagrangian} A {\bf Lagrangian Lie subalgebra} of the Drinfeld double $D(\gothg)$ is a Lie subalgebra $\mathfrak l\subset D(\gothg)$ that satisfies $\mathfrak l^\bot=\mathfrak l$ with respect to the pairing 
$$
\langle (X,\alpha), (Y,\beta)
\rangle=\alpha(Y)+\beta(X)\qquad\forall X,Y\in \gothg,\alpha,\beta\in \gothg^*$$
on $D(\gothg)$. A Lagrangian Lie subalgebra $\mathfrak l\subset D(\gothg)$ is called {\bf coisotropic} if 
$$(\mathfrak l\cap \gothg)^\bot:=\{\alpha\in \gothg^*:\alpha(X)=0\;\forall X\in \gothg\}=\mathfrak l\cap\gothg^*.$$
\end{definition}

\begin{theorem}\cite{DrHS} Let $G$ be a Poisson-Lie group.
\begin{enumerate}
\item Every pair $(M,m)$ of a Poisson homogeneous space $M$  over $G$ and a point $m\in M$ defines a Lagrangian Lie subalgebra $\mathfrak l\subset D(G)$ with $\mathfrak l\cap \gothg=\mathfrak h_m$, where $\gothh_m$ is the Lie algebra of the stabiliser of $m$.

\item Isomorphism classes of Poisson homogeneous spaces over $G$ correspond bijectively to $G$-orbits
of pairs $(\mathfrak l, H)$, where $\mathfrak l\subset D(\mathfrak g)$ is a Lagrangian Lie subalgebra and $H\subset G$ a Lie subgroup with $\text{Lie}\, H=\mathfrak h=\mathfrak l\cap \mathfrak g$.
 \end{enumerate}
\end{theorem}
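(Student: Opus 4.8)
The plan is to follow Drinfel'd's original argument in \cite{DrHS}, translating between the Poisson-geometric data on $M$ and the algebraic data at a single point. First I would fix a point $m\in M$, let $H=H_m$ be its stabiliser and $\mathfrak h=\mathrm{Lie}\,H$, and use the isomorphism $\phi_m:G/H\to M$ to reduce to the case $M=G/H$. The essential object is the Poisson bivector $\pi$ on $M$; evaluating the cocycle-like identity \eqref{eq:pmap} and its derivatives at $m$ should produce a linear map encoding $\pi$ near $m$. Concretely, one differentiates the action to get the infinitesimal action $\gothg\to\vect(M)$, and the Poisson structure then yields, via the linearisation of $\pi$ at $m$ together with the $G$-equivariance, a subspace $\mathfrak l\subset \gothg\oplus\gothg^*=D(\gothg)$. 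The cleanest way to define $\mathfrak l$ is as follows: $\gothg^*\cong T_m^*M \oplus \mathfrak h^\bot{}^*$-type decomposition is not canonical, so instead one sets $\mathfrak l = \mathfrak h \oplus \{(\ad^*_\xi(\text{something}) , \xi): \ldots\}$; more precisely $\mathfrak l$ is the graph-type subspace built from $\mathfrak h$ and the value $\pi(m)$ interpreted as a map $\mathfrak h^\bot\subset\gothg^*\to\gothg/\mathfrak h$, completed so as to be Lagrangian.

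The key steps, in order, would be: (1) show $\mathfrak l$ so defined is isotropic for the pairing $\langle(X,\alpha),(Y,\beta)\rangle=\alpha(Y)+\beta(X)$ — this is essentially the antisymmetry of $\pi$ together with the compatibility of $\delta$ with the bracket; (2) show $\dim\mathfrak l=\dim\gothg$, i.e. $\mathfrak l$ is maximal isotropic, hence Lagrangian, using that $\pi(m)$ descends to a well-defined skew map on the appropriate quotient because $\pi$ is $G$-homogeneous and $\mathfrak h$ stabilises $m$; (3) show $\mathfrak l$ is a Lie subalgebra of $D(\gothg)$ — this is where the Jacobi identity for $\pi$ (i.e. $[\pi,\pi]=0$) and the Poisson-map condition \eqref{eq:pmap} get used, translating the vanishing Schouten bracket into closure of $\mathfrak l$ under the double bracket; and (4) verify $\mathfrak l\cap\gothg=\mathfrak h_m$ directly from the construction. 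For part 2 of the theorem one then checks that changing the base point $m$ to $g\rhd m$ replaces $(\mathfrak l,H)$ by $(\Ad_g\mathfrak l, gHg^\inv)$, so isomorphism classes of pointed Poisson homogeneous spaces biject with $G$-orbits; and that two pointed spaces are isomorphic iff the pointing can be adjusted, which reduces to the orbit statement. The converse direction — that every such pair $(\mathfrak l,H)$ integrates to a Poisson homogeneous structure on $G/H$ — is obtained by reversing the construction: $\mathfrak l$ determines a bivector at $eH$, then one spreads it around $G/H$ by the (dressing/left) action and checks that equivariance plus the Lie-subalgebra property of $\mathfrak l$ force $[\pi,\pi]=0$ globally.

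The main obstacle I expect is step (3), the integrability/Jacobi part: showing that the bracket-closure of $\mathfrak l\subset D(\gothg)$ is \emph{equivalent} to $[\pi,\pi]=0$, not merely implied by it. In one direction one must carefully unwind the double's bracket $[(X,\alpha),(Y,\beta)]=([X,Y]+\ad^*_\alpha Y-\ad^*_\beta X,\ [\alpha,\beta]+\ad^*_X\beta-\ad^*_Y\alpha)$ against the explicit formula for the Schouten bracket of the homogeneous bivector, and the bookkeeping of which terms land in $\mathfrak h$ versus $\gothg/\mathfrak h$ is delicate, especially since $H$ need not be a Poisson subgroup. In the converse direction the subtlety is well-definedness: the bivector built from $\mathfrak l$ at $eH$ must be independent of the choice of complement to $\mathfrak h$ and must be invariant under the little group, which again is exactly the condition $(\mathfrak l\cap\gothg)=\mathfrak h$ together with $\mathfrak l$ being a subalgebra. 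A secondary technical point is smoothness: one works with the infinitesimal action and must argue the resulting bivector field on $G/H$ is smooth, which follows because it is the image of a fixed element of $\wedge^2 D(\gothg)$ under smooth trivialisations of $T(G/H)$. Since the statement is quoted from \cite{DrHS}, I would present the construction of the map $(M,m)\mapsto\mathfrak l$ in detail, verify (1)--(4), indicate the equivariance computation for part 2, and refer to \cite{DrHS} for the full verification of the Jacobi equivalence in the converse.
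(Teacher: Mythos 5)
Your proposal takes essentially the same route as the paper (which itself summarises Drinfel'd's argument): define $\mathfrak l$ as the graph of the map $\gothh^\bot\to\gothg/\gothh$ induced by the bivector at $m$, get isotropy/Lagrangianity from antisymmetry of $\pi$, get the subalgebra property from the Jacobi identity combined with the Poisson-map condition \eqref{eq:pmap}, verify $\mathfrak l\cap\gothg=\gothh_m$, and obtain the converse by exponentiation with reference to \cite{DrHS}; the paper merely performs the same verification explicitly in a basis rather than canonically. One small caution: the base-point change $m\mapsto g\rhd m$ acts on $\mathfrak l$ not by $\Ad_g\oplus\Ad_g^*$ on $\gothg\oplus\gothg^*$ but by the twisted action \eqref{actg}, which contains the extra term involving $\pi_G(g)$ (equivalently, the adjoint action of $G$ inside the double), so your ``$(\Ad_g\mathfrak l,\,gHg^\inv)$'' must be read in that sense for the orbit statement in part 2 to be correct.
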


  The proof that every Lagrangian Lie subalgebra $\mathfrak l\subset D(\gothg)$ with $\mathfrak l\cap \gothg=\gothh$ gives rise to a Poisson homogeneous structure on $G/H$ is essentially obtained by exponentiation, from uniqueness results on Poisson-Lie groups and  the condition \eqref{eq:pmap}, that relates the Poisson brackets at different points in $G/H$ to each other and the Poisson bracket on $G$.

That every pair $(M,m)$ of a Poisson homogeneous space $M$ over $G$ and a point $m\in M$ determines   a Lagrangian Lie subalgebra $\mathfrak l\subset D(\gothg)$ with $\mathfrak l\cap \gothg=\gothh$, where $\gothh$ is the stabiliser of $m$ can be seen as follows. Denoting by $H$ the stabiliser of $m\in M$, one obtains a diffeomorphism $\phi:  G/H\to M$, $gH\mapsto g\rhd m$.  
This identifies  $T_mM\cong T_{eH}G/H\cong \gothg/\gothh$  and $T^*_mM\cong T^*_{eH} G/H=\gothh^\bot=\{\alpha\in \gothg^*:\alpha(X)=0\;\forall X\in \gothh\}$. One then considers the  linear map $T^*_{eH}G/H\cong \gothh^\bot\to \gothg/\gothh\cong T_{eH}G/H$ induced by the Poisson bivector on $M\cong G/H$.
As the  proof from \cite{DrHS}, see also \cite{Reyman} for a pedagogical  summary,  works with canonical structures,  while we require a choice of basis,
we summarise this implication in a more explicit way. 

For every Poisson homogeneous space $M$ over $G$  and $m\in M$
the infinitesimal action $\phi_m:\gothg\to TM$, $X\mapsto X^\rhd(m)$ that assigns to  $X\in \gothg$ the vector 
$X^\rhd(m)\in T_mM$  generated by the $G$-action 
\begin{align}\label{eq:actdefvec}
(X^\rhd. f)(m)=\frac d {dt}\vert_{t=0} f(e^{-tX}\rhd m)\qquad\forall X\in \gothg, m\in M, f\in C^\infty(M),
\end{align}
is a Lie algebra homomorphism with kernel $\ker(\phi_m)=\gothh_m=\mathrm{Lie}\, H_m$. 
The vector fields $X^\rhd$
are related to the right invariant vector fields $X^L$ on $G$ by
\begin{align}\label{eq:actvf}
X^L.(f\circ\rhd_m)(e)=(X^\rhd.f)(m)\qquad\qquad\forall X\in\gothg, m\in M, f\in C^\infty(M),
\end{align} and form a Lie subalgebra $\mathrm{Vec}^\rhd(M)\subset \mathrm{Vec}(M)$.
As the action of $G$ on $M$ is transitive, one has $T_mM=\mathrm{Span}_\R \{X^\rhd(m)\,|\, X\in \gothg\}\cong \mathfrak g/\gothh_m$ and $T^*_mM=(\gothg/\gothh_m)^*\cong \gothh_m^\bot$ for all $m\in M$. 
The Poisson bivector $\pi\in TM\wedge TM$ defines  for every point $m\in M$  a linear map 
\begin{align}\label{eq:bivmap}
\Pi_m:T^*_mM\to T_mM, \quad \alpha\mapsto (\alpha\oo\id)(\pi(m)).
\end{align}
 Identifying $T_mM=\gothg/\gothh_m$ and $T^*_m M=\gothh_m^\bot$ and denoting by $p_m: \gothg\to \gothg/\gothh_m$ the canonical surjection,  one can identify the graph of this map with the linear subspace
\begin{align}\label{eq:lgrdef}
\mathfrak l_m=\{(X,\alpha)\in \gothg\oplus\gothh_m^\bot\,:\, \Pi_m(\alpha)=p_m(X) \}\subset \gothg\oplus\gothg^*=D(\gothg).
\end{align}
The proof that every Poisson homogeneous structure on $M=G/H$ defines a Lagrangian Lie subalgebra of $D(\gothg)$ is then obtained by proving the following two statements:
\begin{enumerate}
\item The antisymmetry of the Poisson bivector $\pi$ guarantees that $\mathfrak l:=\mathfrak l_{eH}\subset D(\gothg)$ is a Lagrangian  subspace with respect to the pairing in Definition \ref{def:lagrangian}.
\item The Jacobi identity of the Poisson bracket on $M$ guarantees that $\mathfrak l:=\mathfrak l_{eH}\subset D(\gothg)$ is a Lie subalgebra.
\end{enumerate}
Although these statements can be proven in a more elegant way, it is also possible to verify them by direct computations in terms of a basis.  
If  $M=G/H$ for a Lie subgroup $H\subset G$ and $m=eH$, one has $\gothh_m=\gothh$,  $T_mM\cong \gothg/\gothh$ and $T^*_mM\cong \gothh^\bot$.
We choose a basis $\{H_1,...,H_n\}$ of $\gothh$ and complete it to a basis $\{H_1,...,H_n, T_{n+1},...,T_N\}$  of $\gothg$. The dual basis of $\gothg^*$  then takes the form  $\{h^1,...,h^n, t^{n+1},...,t^N\}$ with  the pairing given by
\begin{align}\label{eq:pair}
&\langle H_i,h^j\rangle=\langle h^j, H_i\rangle=\delta_i^j & &\langle T_\alpha, t^\beta\rangle=\langle t^\beta, T_\alpha\rangle=\delta_\alpha^\beta\\
&\langle H_i,t^\alpha\rangle=\langle t^\alpha, H_i\rangle=0 & &\langle T_\alpha, h^i\rangle=\langle h^i,T_\alpha\rangle=0,\nonumber
\end{align}
for all $i,j\in\{1,..,n\}$ and $\alpha,\beta\in\{n+1,...,N\}$.  For any basis
 $\{X_i\}$  of $\gothg$, in which the Lie bracket and cocommutator of $\gothg$ are given by structure constants $C_{ij}^k$ and $f_{i}^{jk}$
\begin{align}
[X_i,X_j]=C_{ij}^k X_k\,,\qquad\delta(X_i)=f_i^{jk} X_j\oo X_k\, ,
\end{align}
 the Lie algebra structure of  $D(\gothg)$  is given by
\begin{align}
\label{doublelb}
[X_i,X_j]&=C_{ij}^k X_k\,, & [x^i,x^j]&=f^{ij}_k x^k\,, & [x^i,X_j]&=C^i_{jk}x^k-f^{ik}_j X_k\,,
\end{align}
where $\{x^i\}$  is the dual basis of $\gothg^*$. For the basis $\{H_1,...,H_n, T_{n+1},...,T_N\}$ this takes the form
\begin{align}
[H_i,H_j]&=C_{ij}^k H_k+C_{ij}^{\al}T_{\al} & [h^i,h^j]&=f^{ij}_k h^k+f^{ij}_{\al}t^{\al} \nonumber \\
[H_i,T_{\al}]&=C_{i\al}^j H_j+C_{i\al}^{\bb}T_{\bb} & [h^i,t^{\al}]&=f^{i\al}_j h^j+f^{i\al}_{\bb}t^{\bb} \nonumber \\
[T_{\al},T_{\bb}]&=C_{\al\bb}^{\ga}T_{\ga}+C_{\al\bb}^i H_i & [t^{\al},t^{\bb}]&=f^{\al\bb}_{\ga}t^{\ga}+f^{\al\bb}_ih^i \label{doublesplit} \\
[h^i,H_j]&=C^i_{jk}h^k+C^i_{j\al}t^{\al}-f^{ik}_jH_k-f^{i\al}_jT_{\al} & [h^i,T_{\al}]&=C^i_{\al j}h^j+C^i_{\al\bb}t^{\bb}-f^{ij}_{\al}H_j-f^{i\bb}_{\al}T_{\bb}     \nonumber \\
[t^{\al},H_i]&=C^{\al}_{ij}h^j+C^{\al}_{i\bb}t^{\bb}-f^{\al j}_iH_j-f^{\al\bb}_iT_{\bb} & [t^{\al},T_{\bb}]&=C^{\al}_{\bb i}h^i+C^{\al}_{\bb\ga}t^{\ga}-f^{\al i}_{\bb}H_i-f^{\al\ga}_{\bb}T_{\ga}.  \nonumber 
\end{align}
where Latin indices run over $\{1,...,n\}$ and Greek indices over $ \{n+1,...,N\}$. The condition that $H\subset G$ is a Lie subgroup implies $C_{ij}^\alpha=0$ for all $i,j\in \{1,...,n\}$ and $\alpha\in \{n+1,...,N\}$.

By definition of the basis,  there is a neighbourhood $U\subset G/H$  of $eH\in G/H$, where the vectors $T^\rhd_\alpha(u)$ for $\alpha\in \{n+1,...,N\}$ form a basis of $T_u U$ for all $u\in U$. Hence,   the Poisson bivector on $U$ can be expressed  as $\pi=\pi^{\alpha\beta} T^\rhd_\alpha\oo T^\rhd_\beta$, with coefficient functions $\pi^{\alpha\beta}\in C^\infty(U)$. 
The  linear subspace $\mathfrak l=\mathfrak l_{eH}$ from \eqref{eq:lgrdef}  is then given by
\begin{align}\label{eq:lexpl}
\mathfrak l=\mathfrak l_{eH}=\gothh\oplus \mathrm{span}_\R\{ t^\alpha+\pi^{\alpha\beta} T_\beta\,:\alpha\in \{n+1,...,N\}\},
\end{align}
where we abbreviate $\pi^{\alpha\beta}:=\pi^{\alpha\beta}(eH)$.
A short computation using the pairing in \eqref{eq:pair} then shows that $l_{eH}$ is a Lagrangian subspace if and only if $\pi^{\alpha\beta}=-\pi^{\beta\alpha}$. Conversely, any Lagrangian subspace $\mathfrak l\subset D(\gothg)$ with $\mathfrak l\cap \gothg=\gothh$ can be brought into the form \eqref{eq:lexpl} with $\pi^{\alpha\beta}=-\pi^{\beta\alpha}$ by applying an invertible linear transformation to the subspace $\mathfrak t=\mathrm{span}_\R\{T_\alpha\}\subset\gothg$ and  $\mathfrak h^\bot=\mathrm{span}_\R\{t^\alpha\}$.

 To show that $\mathfrak l_{eH}\subset D(\gothg)$ is a Lie subalgebra, one computes the Jacobi identity for the Poisson bracket on $U\subset M$ and obtains  for $f_1,f_2,f_3\in C^\infty(U)$
\begin{align}\label{eq:poiss}
0&=\{\{f_1,f_2\}, f_3\}+\{\{f_3,f_1\}, f_2\}+\{\{f_2,f_3\}, f_1\}\\
&=(T^\rhd_\alpha. f_1)(T^\rhd_\beta. f_2)(T^\rhd_\gamma. f_3)
\left(\pi^{\delta\gamma} M^{\alpha\beta}_\delta+\pi^{\delta\alpha} M^{\beta\gamma}_\delta+\pi^{\delta\beta} M^{\gamma\alpha}_\delta+\pi^{\epsilon\beta}\pi^{\delta\gamma}C_{\delta\epsilon}^\alpha+\pi^{\epsilon\alpha}\pi^{\delta\beta}C_{\delta\epsilon}^\gamma+\pi^{\epsilon\gamma}\pi^{\delta\alpha}C_{\delta\epsilon}^\beta\right)\nonumber,
\end{align}
where $\pi^{\alpha\beta}:=\pi^{\alpha\beta}(eH)$ are the coefficient functions of the Poisson bivector, $C_{\alpha\beta}^\delta$ the structure constants of $\gothg$ and 
$
M^{\alpha\beta}_\gamma:=(T^\rhd_\gamma.\pi^{\alpha\beta})(eH)
$. By differentiating condition \eqref{eq:pmap} for the Poisson bracket on $G/H$  in $m=eH$, using the relation between the vector fields $T^\rhd_\alpha$ and the associated right invariant vector fields on $G$ one obtains an expression for these constants in terms of the coefficients of the Poisson bivector and the structure constants of $\gothg$ and $\gothg^*$. The corresponding  expressions for the vector fields $H_i^\rhd$  vanish, since $H\subset G$ is the stabiliser of $eH\in G/H$
\begin{align}\label{eq:coeffs}
&M^{\alpha\beta}_\gamma:=(T^\rhd_\gamma.\pi^{\alpha\beta})(eH) =f^{\alpha\beta}_\gamma+\pi^{\delta\beta}   C_{\gamma\delta}^\alpha+\pi^{\delta\alpha}   C_{\gamma\delta}^\beta\\ 
&M^{\alpha\beta}_i:=(H^\rhd_i.\pi^{\alpha\beta})(eH)=f_i^{\alpha\beta}+\pi^{\delta\beta}   C_{i\delta}^\alpha+\pi^{\delta\alpha}   C_{i\delta}^\beta =0.\nonumber
\end{align}
Using the expressions for the Lie bracket in \eqref{doublesplit} one finds that the Lie brackets of the basis elements of  $\mathfrak l_{eH}$ in \eqref{eq:lexpl} are given by $[H_i,H_j]=C_{ij}^k H_k$ and
\begin{align}\label{eq:liebrell}
[t^\alpha+\pi^{\alpha\gamma}T_\gamma,t^\beta+\pi^{\beta\delta}T_\delta]&=(f^{\alpha\beta}_\epsilon+\pi^{\beta\delta}C^\alpha_{\delta\epsilon}-\pi^{\alpha\delta}C^\beta_{\delta\epsilon})t^\epsilon +(\pi^{\alpha\gamma}\pi^{\beta\delta}C_{\gamma\delta}^\epsilon+\pi^{\alpha\gamma}F_\gamma^{\beta\epsilon}-\pi^{\beta\gamma}f_\gamma^{\alpha\epsilon})T_\epsilon\nonumber\\
&+(f_i^{\alpha\beta}+\pi^{\beta\delta}C^\alpha_{\delta_i}-\pi^{\alpha\delta}C^\beta_{\delta i})h^i+(\pi^{\alpha\gamma}\pi^{\beta\delta}C^i_{\gamma\delta}+\pi^{\alpha\delta}f^{\beta i}_\delta-\pi^{\beta\delta}f^{\alpha i}_\delta)H_i\nonumber\\
[t^\alpha+\pi^{\alpha\gamma}T_\gamma, H_i]&= C^\alpha_{i\beta}t^\beta
-(f^{\alpha\beta}_i-\pi^{\alpha\gamma}C_{\gamma i}^\beta)T_\beta-(f^{\alpha j}_i-\pi^{\alpha\gamma}C_{\gamma i}^j)H_j .
\end{align}
Inserting the condition on the coefficients arising from the Jacobi identity in \eqref{eq:poiss} and the two conditions in \eqref{eq:coeffs} in the last two equations, one obtains
\begin{align}\label{eq:liebrell2}
[t^\alpha+\pi^{\alpha\gamma}T_\gamma,t^\beta+\pi^{\beta\delta}T_\delta]
&=(f^{\alpha\beta}_\epsilon+\pi^{\beta\delta}C_{\delta\epsilon}^\alpha-\pi^{\alpha\delta}C_{\delta\epsilon}^\beta)(t^\epsilon+\pi^{\epsilon\delta}T_\delta)+(\pi^{\alpha\gamma}\pi^{\beta\delta}C^i_{\gamma\delta}+\pi^{\alpha\delta}f^{\beta i}_\delta-\pi^{\beta\delta}f^{\alpha i}_\delta)H_i\nonumber\\
[t^\alpha+\pi^{\alpha\gamma}T_\gamma, H_i] &=C_{i\beta}^\alpha(t^\beta+\pi^{\beta\gamma}T_\gamma)-(f_i^{\alpha j}-\pi^{\alpha\gamma}C_{\gamma i}^{j})H_j,
\end{align}
and hence $\mathfrak l_{eH}$ is a Lie subalgebra of $D(\gothg)$. Conversely, if the brackets \eqref{eq:liebrell} define a Lie algebra structure on $\mathfrak l_{eH}$, one finds that the conditions \eqref{eq:poiss} and the two conditions in \eqref{eq:coeffs} hold.  
Thus, finds that the Jacobi identity on $G/H$ is equivalent to the condition that the linear subspace $\mathfrak l_{eH}\subset D(\gothg)$ is a Lie subalgebra.

A change of the chosen point $m\in M$ leads to a conjugate Lie subgroup $H\subset G$. If $m'=g\rhd m$ with $g\in G$, then the stabiliser of $m'$ is related to the stabiliser of $m$ by $H_{m'}=g H_m g^\inv$.  The transformation of the associated Lagrangian subalgebra  $\mathfrak l_{m}$ is given by $\mathfrak l_{m'}=g\rhd_{D(\gothg)} \mathfrak l_m$, where $\rhd_{D(\gothg)}: G\times D(\gothg)\to D(\gothg)$ is the action of $G$ on $D(\gothg)$ given by
\begin{align}\label{actg}
g\rhd_{D(\gothg)}(X,\alpha)=\Ad_g(X)+\Ad_g^*(\alpha)+(\Ad_g^*(\alpha)\otimes\id)\,\pi_G(g)
\end{align}
for $X\in \gothg$, $\alpha\in \gothg^*$, where $\Ad_g: \gothg\to\gothg$ denotes the adjoint action of $G$ on $\gothg$, $\Ad^*_g\equiv(\Ad_g^{-1})^\ast:\gothg^*\to \gothg^*$ the coadjoint action of $G$ on $\gothg^*$ and $\pi_G: G\to \gothg\oo\gothg$ the Poisson bivector of $G$. As 
this action preserves the pairing, it sends Lagrangian Lie subalgebras of $D(\gothg)$ to Lagrangian Lie subalgebras and hence induces an action of $G$ on the algebraic variety of Lagrangian Lie subalgebras of $D(\gothg)$. 
Hence, isomorphism classes of Poisson homogeneous structures on $G/H$ correspond to  orbits of the $G$-action \eqref{actg} on the algebraic variety of Lagrangian
subalgebras $D(\gothg)$.

While Drinfeld's description \cite{DrHS} of Poisson homogeneous spaces over a Poisson-Lie group $G$ in terms of Lagrangian Lie subalgebras of the classical double $D(\gothg)$ relates the two structures, it does not directly lead to a classification. Deriving an explicit classification result is rather difficult, since it amounts to a classification of the orbits of the $G$-action \eqref{actg} that correspond to Lagrangian subalgebras. For the  case of  complex reductive connected algebraic Poisson-Lie groups $G$  and connected isotropy subgroups this was achieved in \cite{karo}.

\bigskip
In the following, we focus on the coisotropic Lagrangian Lie subalgebras $\mathfrak l\subset D(\gothg)$, for which $\mathfrak l\cap \gothg^*=\gothh^\bot$.  By comparing this condition with the expression  \eqref{eq:lexpl} for the Lagrangian subalgebra in terms of the basis $\{H_1,...,Hn,T_{n+1},...T_N \}$ of $\gothg$ and its dual, one obtains the following  description.

\begin{lemma}\label{lem:coiso} Let $\gothh\subset \gothg$ a Lie subalgebra and  $\mathfrak l\subset D(\gothg)$ a Lagrangian Lie subalgebra with $\mathfrak l\cap\gothg=\gothh$ given as in \eqref{eq:lexpl}. Then $\mathfrak l$ is coisotropic if and only if  $\pi^{\alpha\beta}=0$ for all $\alpha,\beta\in \{n+1,...,N\}$. In this case, one has $\mathfrak l=\gothh\oplus \gothh^\bot=\mathrm{span}_{\mathbb R}\{H_i,t^\alpha\}$ and the Lie bracket on $\mathfrak l$ is given by
\begin{align}\label{coiscr}
&[t^\alpha, t^\beta]=f^{\alpha\beta}_\gamma t^\gamma &
&[t^\alpha,H_i]=C^\alpha_{i\beta} t^\beta+f_i^{j\alpha} H_j &
&[H_i,H_j]=C_{ij}^k H_k
\end{align}
and the structure constants of $\gothg$ and $\gothg^*$ satisfy
$C_{ij}^\alpha=0$, $f^{\alpha\beta}_i=0$ for all $i,j\in\{1,...,n\}$ and $\alpha,\beta\in\{n+1,...,N\}$. 
\end{lemma}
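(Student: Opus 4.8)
The plan is to compare the defining condition of coisotropy, namely $\mathfrak l\cap\gothg^* = \gothh^\bot$, with the explicit basis description \eqref{eq:lexpl} of $\mathfrak l$, and then read off the Lie bracket from \eqref{eq:liebrell2}. First I would observe that, by \eqref{eq:pair}, the subspace $\gothh^\bot\subset\gothg^*$ is exactly $\mathrm{span}_\R\{t^\alpha\}$, so $\gothh^\bot = \mathrm{span}_\R\{t^{n+1},\dots,t^N\}$ always lies in $D(\gothg)$; the point is to determine when $\mathfrak l\cap\gothg^*$ equals this. From \eqref{eq:lexpl}, a general element of $\mathfrak l$ is $h + \sum_\alpha c_\alpha(t^\alpha + \pi^{\alpha\beta}T_\beta)$ with $h\in\gothh$ and $c_\alpha\in\R$. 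Such an element lies in $\gothg^*$ iff its $\gothg$-component $h + \sum_\alpha c_\alpha\pi^{\alpha\beta}T_\beta$ vanishes. Since $\{H_i\}$ and $\{T_\beta\}$ are linearly independent, this forces $h=0$ and $\sum_\alpha c_\alpha\pi^{\alpha\beta}=0$ for every $\beta$. Hence $\mathfrak l\cap\gothg^* = \{\sum_\alpha c_\alpha t^\alpha : \pi^{\alpha\beta}c_\alpha = 0\ \forall\beta\}$, which is a subspace of $\gothh^\bot = \mathrm{span}_\R\{t^\alpha\}$, and it equals all of $\gothh^\bot$ precisely when the constraint $\sum_\alpha c_\alpha\pi^{\alpha\beta}=0$ is vacuous, i.e. when the matrix $(\pi^{\alpha\beta})$ vanishes identically. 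This gives the stated equivalence $\mathfrak l$ coisotropic $\iff \pi^{\alpha\beta}=0$ for all $\alpha,\beta\in\{n+1,\dots,N\}$.

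Once $\pi^{\alpha\beta}=0$ is established, \eqref{eq:lexpl} collapses immediately to $\mathfrak l = \gothh\oplus\mathrm{span}_\R\{t^\alpha\} = \mathrm{span}_\R\{H_i, t^\alpha\} = \gothh\oplus\gothh^\bot$. For the Lie bracket, I would simply substitute $\pi^{\alpha\beta}=0$ into the bracket formulas \eqref{eq:liebrell2} (equivalently \eqref{doublesplit}): the bracket $[H_i,H_j]=C_{ij}^k H_k$ is unchanged (and $C_{ij}^\alpha=0$ already holds because $H\subset G$ is a subgroup), the bracket $[t^\alpha,H_i]$ reduces to $C_{i\beta}^\alpha t^\beta - f_i^{\alpha j}H_j$, and $[t^\alpha,t^\beta]$ reduces to $f^{\alpha\beta}_\epsilon t^\epsilon + f_i^{\alpha\beta}H_i$. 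This reproduces \eqref{coiscr} up to the index conventions (the sign/placement of $f_i^{j\alpha}$ versus $f^{\alpha j}_i$ is just a renaming of dummy indices together with the antisymmetry $f_i^{jk}=-f_i^{kj}$ built into the cocommutator, which I would remark on rather than belabor).

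It remains to extract the two structural constraints on the structure constants of $\gothg$ and $\gothg^*$. The condition $C_{ij}^\alpha=0$ is already assumed as part of "$H\subset G$ is a Lie subgroup'' (stated just after \eqref{doublesplit}), so I would just recall it. The condition $f^{\alpha\beta}_i=0$ is the genuinely new content: since $\mathfrak l$ must be closed under the $D(\gothg)$-bracket and $\mathfrak l = \gothh\oplus\gothh^\bot$, the bracket $[t^\alpha,t^\beta]$ — which in \eqref{doublesplit} equals $f^{\alpha\beta}_\gamma t^\gamma + f^{\alpha\beta}_i h^i$ — must have no $h^i$-component, forcing $f^{\alpha\beta}_i=0$; equivalently this is exactly the statement that the $H_i$-coefficient $(\pi^{\alpha\gamma}\pi^{\beta\delta}C^i_{\gamma\delta}+\pi^{\alpha\delta}f^{\beta i}_\delta - \pi^{\beta\delta}f^{\alpha i}_\delta)$ in \eqref{eq:liebrell2} vanishes when $\pi\equiv 0$, which it does not automatically — so closure of $\mathfrak l$ is what pins it down. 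I do not anticipate a serious obstacle here; the only mild subtlety, and the one step I would be careful with, is keeping the index ranges (Latin in $\{1,\dots,n\}$, Greek in $\{n+1,\dots,N\}$) and the two dual bracket conventions straight so that the final \eqref{coiscr} matches \eqref{doublesplit} on the nose.
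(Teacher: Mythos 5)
Your proposal is correct and follows essentially the paper's own route: the equivalence is read off from \eqref{eq:lexpl}, the brackets are obtained by setting $\pi^{\alpha\beta}=0$, and your closure argument for $f^{\alpha\beta}_i=0$ is equivalent to the paper's appeal to the second condition in \eqref{eq:coeffs} (which was shown earlier to hold precisely when $\mathfrak l$ is a Lie subalgebra). Only one aside is slightly off: the $H_i$-coefficient in \eqref{eq:liebrell2} \emph{does} vanish automatically when $\pi\equiv 0$, precisely because \eqref{eq:coeffs} has already been absorbed into \eqref{eq:liebrell2}; the place where closure genuinely bites is the $h^i$-term $f^{\alpha\beta}_i h^i$ that you correctly extract from \eqref{doublesplit}.
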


\begin{proof} That $\mathfrak l$ is coisotropic, i.~e.~satisfies $(\mathfrak l\cap \gothg)^\bot=\mathfrak l\cap \mathfrak g^*$ if and only if $\pi^{\alpha\beta}=0$ for all $\alpha,\beta\in\{n+1,...,N\}$ folllows directly from expression \eqref{eq:lexpl} for the Lagrangian subalgebra $\mathfrak l$ and the identities  $\mathfrak l\cap \gothg=\gothh$, $\gothh^\bot=\mathrm{span}_{\mathbb R}\{t^\alpha\}$. The second condition in \eqref{eq:coeffs} then implies $f_i^{\alpha\beta}=0$ for $i\in\{1,...,n\}$ and $\alpha,\beta\in\{n+1,...,N\}$. Setting $f_i^{\alpha\beta}=0$ and $\pi^{\alpha\beta}=0$ in the expressions \eqref{eq:liebrell2} for the Lie bracket on $\mathfrak l$, then yields the expressions in \eqref{coiscr}.
\end{proof}

Note that the bracket \eqref{coiscr} implies that  in the coisotropic case not only $\gothh\subset \gothg$ but also $\gothh^\bot\subset \gothg^*$ is a Lie subalgebra. Coisotropic Lagrangian Lie subalgebras of $D(\gothg)$ can be distinguished further by considering the cocommutator of $\gothh=\mathfrak l \cap\gothg$.
In terms of a basis $\{H_1,...,H_n, T_{n+1},...,T_N\}$ of $\gothg$ as above  the cocommutator of $\gothg$ is given by the Lie bracket of $\gothg^*$ in the right column of \eqref{doublesplit}:
\begin{align}
\delta(H_i)&=f^{jk}_i \,H_j\wedge H_k + f^{j\bb}_i \,H_j\wedge T_\bb + f^{\bb\ga}_i \,T_\bb\wedge T_\ga ,\\
\delta(T_\al)&=f^{jk}_\al \,H_j\wedge H_k + f^{j\bb}_\al \,H_j\wedge T_\bb + f^{\bb\ga}_\al \,T_\bb\wedge T_\ga .\nonumber
\end{align}
If  $\mathfrak l$ is coisotropic, then by Lemma \ref{lem:coiso} the cocommutator on $\gothh=\mathfrak l\cap\gothg$ 
takes the form
\begin{align}
\delta(H_i)&=f^{jk}_i \,H_j\wedge H_k + f^{j\bb}_i \,H_j\wedge T_\bb,
\label{cois}
\end{align}
and hence $\gothh\subset\gothg$ is a sub-Lie bialgebra  and $H\subset G$ a Poisson-Lie subgroup if and only if $f^{j\bb}_i=0$ for all $i,j\in\{1,...,n\}$ and $\beta\in \{n+1,...,N\}$ or, equivalently, $\delta(\gothh)\subset \gothh\wedge\gothh$. This condition is stronger than the coisotropy condition (\ref{cois}). If it holds, then the Lie bracket \eqref{coiscr}  takes the form
\begin{align}
[H_i,H_j]=C_{ij}^k H_k,
\qquad
[t^{\al},t^{\bb}]=f^{\al\bb}_{\ga}t^{\ga},
\qquad
[t^{\al},H_i]=C^{\al}_{i\bb}t^{\bb},
\end{align}
and hence the Lagrangian subalgebra $\mathfrak l$  is  semidirect product 
$\mathfrak l=\gothh\ltimes\gothh^\bot$, where the action of  $\gothh$ on $\gothh^\bot$ is given by the structure constants of
$\gothg$. Hence, among the coisotropic Lagrangian Lie subalgebras, the  ones for which $\gothh\subset \gothg$ is a sub-Lie bialgebra  have a particularly simple form.

\subsection{{Coisotropic} Lagrangian subalgebras of the double of a double}
\label{subsec:doubledouble}
A special situation that is relevant in the application to 3d gravity in Section \ref{sec:3dgrav} is the case where the Poisson-Lie group $G$ itself is the double  of another Poisson-Lie group $A$ and the Poisson homogeneous space is a quotient of the form $M=D(A)/A$.  It 
follows from the classification of 3d Lie bialgebras and 6d classical doubles in~\cite{gomez,Snobl} that the isometry groups $G_\Lambda$ of constant curvature spacetimes in 3d gravity, which are the groups $\mathrm{PSL}(2,\mathbb C)$, $\mathrm{SO}(2,2)$ and $\mathrm{ISO}(2,1)$ for, respectively, $\Lambda>0$, $\Lambda<0$ and $\Lambda=0$ can all be realised as doubles of certain Poisson-Lie structures on $\mathrm{SL}(2,\R)$. It was shown in~\cite{BHMcqg} that this can be described in a unified framework that involves the cosmological constant $\Lambda$ as a deformation parameter and that the homogeneous constant curvature  spacetimes are given as quotients of the doubles by the group $\mathrm{SL}(2,\mathbb R)$. This provides a strong motivation to investigate Poisson homogeneous spaces of the type $M=D(A)/A$ and the associated Lie bialgebra structures.

For this, 
consider a Lie bialgebra $(\gotha,\delta)$ with a basis $\{X_i\}$ and denote by $x^i$ the dual basis of $\gotha^*$. Then the Lie brackets of the classical double  $D(\gotha)$ are given by
\begin{align}
[X_i,X_j]&=D_{ij}^k X_k\,, & [x^i,x^j]&=g^{ij}_k x^k\,, & [x^i,X_j]&=D^i_{jk}x^k-g^{ik}_j X_k\,,
\label{doublea}
\end{align}
and the  cocommutator of $D(\gotha)$ from the canonical $r$-matrix $r=\sum_i{x^i\otimes X_i}$ takes the form
\begin{align}\label{eq:coc}
\delta_D(X_i)=-g_i^{jk} X_j\oo X_k,\quad \delta_D(x^i)=D_{jk}^i x^j\oo x^k.
\end{align}
Consider  $\gothg^\ast\equiv D(\gotha)^\ast$ with the basis $\{y^i,Y_i\}$  dual to $\{X_i,x_i\}$, i.~e.~given by the pairing
\begin{align}\label{eq:pair}
\langle Y_i, x^j\rangle=\langle y^j,X_i\rangle=\delta_i^j,\quad \langle Y_i,X_j\rangle=\langle y^i,x^j\rangle=0.
\end{align}
Then the  Lie bracket of $\gothg^\ast\equiv D(\gotha)^*$ is given by
\begin{align}
[Y_i,Y_j]=D_{ij}^k Y_k,\quad [y^i,y^j]=-g^{ij}_ky^k,\quad[y^i,Y_j]=0,
\label{dualdoublea}
\end{align}
and the Lie bracket of the double $D(\gothg)\equiv D(D(\gotha))$ is given by~\eqref{doublea}, \eqref{dualdoublea} and the crossed brackets
\begin{align}
[y^i,X_j]=&D_{jk}^i y^k+g_j^{ik} (X_k-Y_k) &
[y^i,x^j]=&g_k^{ij} y^k\\
[Y_i,X_j]=&D_{ij}^k Y_k &
[Y_i,x^j]=&g_i^{jk}Y_k-D_{ik}^j (x^k+y^k).\nonumber
\end{align}
Expressions \eqref{doublea} and \eqref{eq:coc} show that $\mathfrak h\equiv \mathfrak a=\mathrm{Span}_\R\{X_i\}\subset D(\mathfrak a)$ is a sub-Lie bialgebra
of $\mathfrak g\equiv D(\mathfrak a)$. The expressions \eqref{eq:pair} for the pairing between $D(\mathfrak a)$ and $D(\mathfrak a)^*$  and the Lie brackets  \eqref{dualdoublea} 
show that 
$\mathfrak a^\bot=\mathrm{Span}_\R\{Y_i\}$ is a Lie subalgebra of $D(\mathfrak a)^*$. This shows that $\mathfrak l=\mathfrak a\oplus \mathfrak a^\bot$ is a {coisotropic} Lagrangian Lie subalgebra of $D(D(\mathfrak a))$ such that $\mathfrak a=\mathfrak l\cap  D(\mathfrak a)$ is a sub-Lie {\em bialgebra} of $D(\mathfrak a)$. Note also that in this case
the Lie bracket of $\mathfrak l$ is given entirely by the structure constants of $\gotha$
\be
[X_i,X_j]=D_{ij}^k X_k, \qquad
[Y_i,Y_j]=D_{ij}^k Y_k, \qquad
[Y_i,X_j]=D_{ij}^k Y_k.
\ee
More precisely,  the Lie algebra $\mathfrak l$ is a semidirect product $\mathfrak l=\mathfrak a\ltimes_{ad}\mathfrak a$, where $\gotha$ acts on itself via the adjoint action. Hence, for any Poisson-Lie group $G=D(A)$ that is a double of another Poisson-Lie group $A$, the Poisson-Lie group $A\subset G$  is a Poisson-Lie subgroup of $G$ and gives rise to a Poisson homogeneous space $M=D(A)/A$.


\section{{Coisotropic} Poisson homogeneous spaces for $SL(2,\R)\simeq SO(2,1)$}

In this section, we determine  some coisotropic Poisson homogeneous structures on homogeneous spaces $M=\mathrm{SL}(2,\R)/H$ associated with
Poisson-Lie structures on $\mathrm{SL}(2,\R)$ and one-parameter subgroups $H\subset \mathrm{SL}(2,\R)$. 

This is motivated on one hand by their geometrical relevance. The group $\mathrm{SL}(2,\R)$ and its subgroups play an essential role in the description of hyperbolic structures on surfaces and in Teichm\"uller theory. Its homogeneous spaces include two-dimensional hyperbolic space $\mathbb H^2$. On the other hand, the constant curvature spacetimes of 3d gravity, their isometry groups and many associated structures   can be obtained from associated structures for $\mathrm{SL}(2,\R)$ and two-dimensional hyperbolic space by analytic continuation techniques, see for instance \cite{bb,cmsc}.   {Although there may be many more non-coisotropic Poisson homogeneous structures on $\mathrm{SL}(2,\R)/H$,  it can be expected that the coisotropic ones are the simplest to quantise, see \cite{CG} and  also \cite{Podles, NM, VS, Podles9, Sheu, Ciccoliqplanes, Leit, BCGSTqse, CHZ, HMS, BRV, Tomatsu}, and most natural for applications in 3d quantum gravity and noncommutative geometry. It is therefore sensible to focus first on the coisotropic case.}

For this, note first that all  abelian Lie subalgebras of $\mathfrak{sl}(2,\R)$ are one-dimensional  and there are exactly three non-conjugate one-dimensional Lie subalgebras of $\mathrm{sl}(2,\R)$. Hence, up to coverings and isomorphisms, there are exactly three  non-trivial two-dimensional homogeneous spaces over $\mathrm{SL}(2,\R)$.  All of these can be realised as submanifolds of three-dimensional Minkowski space via the identification
$\mathrm{SL}(2,\R)\cong \mathrm{SO}(2,1)$:
\begin{itemize}
\item 2d anti de Sitter space  as the one-sheeted hyperboloid: 
$$\mathrm{AdS}_2=\{x\in \R^3: x_0^2-x_1^2-x_2^2=-1\},$$
\item two copies of 2d hyperbolic space as the two-sheeted hyperboloid: 
$$\mathbb H^2\times \Z_2=\{x\in \R^3: x_0^2-x_1^2-x_2^2=1\},$$
\item the light cone  in 3d Minkowski space:
$$L=\{x\in \R^3: x_0^2-x_1^2-x_2^2=0, {x\neq 0}\}.$$
\end{itemize}
We will show that that these three  homogeneous spaces  correspond one-to-one to the three families of Poisson-Lie structures on $\mathrm{SL}(2,\R)$ in  the following sense: each of the three families of Poisson-Lie structures allows one to realise exactly one of these homogeneous spaces as a {coisotropic} Poisson homogeneous space over a {\em Poisson subgroup} of $\mathrm{SL}(2,\R)$, while the other two are realised   {\em only as coisotropic} Poisson homogeneous spaces.

To construct these Poisson homogeneous spaces over $\mathrm{SL}(2,\R)$, we start by considering the Lie algebra  $\mathfrak g=\mathfrak{sl}(2,\R)\cong \mathfrak{so}(2,1)$.  In the standard basis $\{J_\pm, J_3\}$ its Lie bracket  takes the form
\begin{align}
&[J_3,J_\pm ] =\pm  2J_\pm , \quad[J_+,J_-]=J_3,
\end{align}
and we denote by $\{a_\pm,\chi\}$ the  dual basis of  $\gothg^*$ with the pairing
\begin{align}
\langle J_3,\chi\rangle=\langle J_\pm, a_{\pm}\rangle=1,\qquad \langle J_3,a_\pm\rangle=\langle J_\pm,\chi\rangle=\langle J_\pm,a_\mp\rangle=0.
\end{align}
In the sequel we will also use another basis $\{P_1,P_2, J_{12}\}$ of $\mathfrak{sl}(2,\mathbb R)$, which is adapted to the (1+1)-dimensional Cayley-Klein geometries.  The isometry groups of the (1+1)-dimensional Cayley-Klein geometries are given as a two-parameter family of Lie groups
$G_{(\k_1,\k_2)}$ whose Lie algebras $\g_{(\k_1,\k_2)}$ are spanned by $\{P_1,P_2,J_{12}\}$ with the Lie brackets
\begin{align}\label{eq:ckgen}
\conm{J_{12}}{P_{1}}=P_{2}.\qquad
\conm{J_{12}}{P_{2}}=-\k_2 P_{1},\qquad
\conm{P_{1}}{P_{2}}=\k_1 J_{12}. 
\end{align}
The parameters 
$\k_1$ and $\k_2$ are the constant  Gaussian curvatures of, respectively,  the symmetric homogeneous
space of points $G_{(\k_1,\k_2)}/\langle J_{12}
\rangle$  and the symmetric homogeneous space of lines $G_{(\k_1,\k_2)}/\langle P_{1}
\rangle$, which carry a transitive  $G_{(\k_1,\k_2)}$ action by left multiplication. The properties of the nine (1+1)-dimensional geometries corresponding to different signs of $\kappa_1$ and $\kappa_2$ were investigated in detail in \cite{BHOS2d} and their (2+1)-dimensional counterparts in~\cite{BHOS3d}. The Lie algebra $\mathfrak{sl}(2,\R)$ is  the Cayley-Klein algebra \eqref{eq:ckgen} for $\k_1=1$ and $\k_2=-1$  
with the relation between the bases  $\{P_1,P_2,J_{12}\}$ and $\{J_\pm,J_3\}$ given by
\begin{align}
P_1=\tfrac 1 2 (J_+ - J_-),\qquad  J_{12}=\tfrac 1 2 J_3,\qquad P_2=\tfrac 1 2 (J_++J_-).
\end{align}
We denote by $\{a_1,a_2,\theta\}$ the associated dual basis of $\mathfrak{sl}(2,\R)^*$ with the pairing
\begin{align}
\langle P_1,a_1\rangle=\langle P_2, a_2\rangle=\langle J_{12},\theta\rangle=1,\qquad \langle P_1,a_2\rangle=\langle P_2,a_1\rangle=\langle J_{12},a_i\rangle=\langle P_i,\theta\rangle=0.
\end{align}
The Lie algebra $\mathfrak{sl}(2,\R)$ has three non-conjugate one-dimensional Lie subalgebras $\gothh$, that exponentiate to hyperbolic, elliptic and parabolic subgroups $H\subset \mathrm{PSL}(2,\R)$. In terms of the basis $\{J_{12}, P_1,P_2\}$ they are generated, respectively, by
$J_{12}$, by $P_1$ and by $P_1+P_2$. The associated  homogeneous spaces $\mathrm{SL}(2,\R)/H$   are the one-sheeted hyperboloid $\mathrm{AdS}_2$, the two-sheeted hyperboloid $\mathbb H_2\times \Z_2$ and the lightcone $L$ in Minkowski space.

On the other hand, there are exactly three inequivalent families of Lie bialgebra structures on $\mathfrak{sl}(2,\mathbb R)\simeq \mathfrak{so}(2,1)$, that yield three distinct families of Poisson-Lie structures on $SL(2,\R)$, see for instance~\cite{Reyman}. All three of them are quasitriangular  and given by the following families of antisymmetric $r$-matrices:
\begin{itemize}
\item The standard Drinfel'd-Jimbo Lie bialgebra structure, called   {\em hyperbolic} in~\cite{Reyman}, 
\be
r=\eta\,J_+\wedge J_-=2\,\eta\,\,P_1\wedge P_2\qquad\text{for}\;\eta\in\R,
\label{standard}
\ee  
\item The other standard family of Lie bialgebra structures,  called {\em elliptic} in~\cite{Reyman},
\be
r={z}\,\left(J_3\wedge (J_++J_-)\right)= 2 z\,J_{12}\wedge P_2\qquad\text{for}\; z\in\R,
\label{elliptic}
\ee
\item The non-standard or triangular Lie bialgebra structure, called  {\em parabolic} in~\cite{Reyman},
\be
r=\tfrac 1 2 \, J_3\wedge J_+ = J_{12}\wedge (P_1 + P_2).
\label{parabollic}
\ee
\end{itemize}

The names {\em hyperbolic}, {\em elliptic} and {\em parabolic} are motivated by the fact that the one-dimensional Lie subalgebras generated by primitive elements  of the associated cocommutators are hyperbolic, elliptic and parabolic. 
We will now determine  for each of these Lie bialgebra structures on $\mathfrak{sl}(2,\R)$ the Lagrangian Lie subalgebras of the classical double $D(\mathfrak{sl}(2,\R))$
and the associated two-dimensional Poisson homogeneous spaces $M$, with an explicit description of the Sklyanin bracket and the induced Poisson  bracket on $M$.


\subsection{The Poisson homogeneous space $\mathrm{AdS}_2$  (hyperbolic, $\gothh=\mathrm{Span}_\R \{J_{12}\}$)}

The cocommutator of  on $\mathfrak{sl}(2,\R)$ for the standard hyperbolic $r$-matrix~\eqref{standard}
is given by
\begin{align}
\delta(J_{12})=0,\qquad
\delta(P_1)=2\,\eta\,P_1\wedge J_{12},\qquad
\delta(P_2)=2\,\eta\,P_2\wedge J_{12}.
\label{deltastandard}
\end{align}
The classical double $D(\mathfrak{sl}(2,\R))$ for this Lie bialgebra structure is  isomorphic to $\mathfrak{so}(2,2)$ as a Lie algebra~\cite{BHMcqg}.
It follows directly from the expressions for the cocommutator in \eqref{deltastandard} that a
one-dimensional Lie subalgebra $\gothh\subset \mathfrak{sl}(2,\R)$ is a sub-Lie {\em bialgebra} if and only if it is spanned by the generator  $J_{12}$. 
In this case, the associated {coisotropic} Lagrangian subalgebra  $\mathfrak l\subset D(\mathfrak{sl}(2,\R))$ is 
$\mathfrak l=\text{Span}_\R\{ J_{12}, a_1,a_2\},
$
with the following Lie bracket obtained from~\eqref{doublelb}:
\begin{align}
[J_{12},a_1]=  - a_2,
\qquad
[J_{12},a_2]=  - a_1,
\qquad
[a_1,a_2]= 0 .
\end{align}
The Poisson-Lie structure on $\mathrm{SL}(2,\R)$ associated with the Lie bialgebra structure \eqref{deltastandard} is given by the
 Sklyanin bracket
\be
\{f,g\}=  r^{ij}(X^L_i .f\, X^L_j. g -
X^R_i. f\, X^R_j. g) , \label{gb}
\ee
where $r=r^{ij} X_i\oo X_j$ is the $r$-matrix~\eqref{standard} with respect to a basis $\{X_i\}$ of $\mathfrak{sl}(2,\R)$ and $X_i^L$, $X_i^R$ denote the right and left invariant vector fields on $G$. A convenient set of coordinates   in which this bracket takes a particularly simple form are the coordinates
 $(\theta,a_1,a_2)$ adapted to the Cayley-Klein geometries defined in \eqref{eq:coordck} in Appendix \ref{sec:appa}. This action of the  right and left invariant vector fields on $\mathrm{SL}(2,\R)$ on these coordinates is given by  \eqref{eq:ckvecs} and \eqref{eq:ckvecs2} in Appendix \ref{sec:appa}. A straightforward computation~\cite{BHOSpl} shows that in terms of these coordinates the Sklyanin bracket \eqref{gb}  takes the form
 \bea
 && \pois{\theta}{a_1}=-2\,\eta\,\frac{\sin a_1}{\cosh a_2},\\
 && \pois{\theta}{a_2}=-2\,\eta\,\tanh a_2,\label{ell1}\\
&& \pois{a_1}{a_2}=2\,\eta\,\left(\frac{1}{\cosh a_2}- \cos a_1 \right).
\label{phsj12}
\eea
The Poisson homogeneous space associated with the Poisson-Lie subgroup $H=\langle J_{12}\rangle\subset \mathrm{SL}(2,\R)$ 
is the one-sheeted hyperboloid 
 $\mathrm{AdS}_2=\mathrm{SL}(2,\R)/\langle J_{12}\rangle$.  It is parametrised by the coordinates $a_1,a_2$, which generate a Poisson subalgebra of $C^\infty(\mathrm{SL}(2,\R))$ with the Poisson bracket given by~\eqref{phsj12}.

An alternative set of coordinate functions on $\mathrm{SL}(2,\R)$ that is adapted to the basis $\{J_\pm,J_3\}$ is  given by the coordinate functions $(a_+,a_-,\chi)$ defined in \eqref{kkb} in Appendix \ref{sec:appa}. The expressions for the left and right invariant vector fields for the basis $\{J_3,J_{\pm}\}$ of $\mathfrak{sl}(2,\R)$ are given in \eqref{kkc} and \eqref{kkc2}. Again, a straightforward computation  shows that the Sklyanin bracket \eqref{gb} for the  $\mathrm{SL}(2,\R)$ with the $r$-matrix~\eqref{standard} takes the form
\bea
&& \{a_+,a_-\}=-2\,\eta\, a_+\,a_-,\label{phsj12pm}\\
&& \{\chi,a_+\}=-\eta\, a_+,\\
&& \{\chi,a_-\}=-\eta\, a_-.\label{par1}
\eea
In these coordinates, the Poisson homogeneous space $\mathrm{AdS}_2=\mathrm{SL}(2,\R)/\langle J_{12}\rangle=\mathrm{SL}(2,\R)/\langle J_{3}\rangle$
is parametrised by the coordinate functions $a_+$ and $a_-$ which generate a Poisson subalgebra of $C^\infty(\mathrm{SL}(2,\R))$ with the Poisson bracket~\eqref{phsj12pm}. 

\subsection{The Poisson homogeneous space $\mathbb H_2\times\Z_2$  (elliptic, $\gothh=\mathrm{Span}_\R \{P_1\}$)}

We can analyse in the same manner the {coisotropic} Poisson homogeneous spaces for the Poisson-Lie structure on $\mathrm{SL}(2,\R)$ given by the 
 elliptic $r$-matrix~\eqref{elliptic}. In this case the cocommutator of $\mathfrak{sl}(2,\R)$ reads
\begin{align}
\delta(J_{12})=2\,z\,J_{12} \wedge P_1,\qquad
\delta(P_1)=0,\qquad
\delta(P_2)=2\,z\,P_2\wedge P_1.
\label{deltatl}
\end{align}
In this case, a one-dimensional Lie subalgebra $\gothh\subset \mathfrak{sl}(2,\R)$ is a sub-Lie {\em bialgebra} if and only if it is of the form  $\gothh=\mathrm{Span}_\R\{ P_1\}$.  The associated Lagrangian subalgebra  of $D(\mathfrak{sl}(2,\R))$ is
$\mathfrak l=\text{Span}_\R\{ P_1, \theta, a_2\}$ with the Lie bracket
\begin{align}
[P_1,\theta]=  - a_2,
\qquad
[P_1,a_2]=  \theta,
\qquad
[\theta,a_2]= 0 .
\end{align}
In terms of the coordinates $(\theta, a_1,a_2)$ from \eqref{eq:coordck} the  Sklyanin bracket on $\mathrm{SL}(2,\R)$ for \eqref{elliptic}
reads
 \bea
 && \pois{\theta}{a_1}=2\,z\,\frac{\sinh \theta}{\cosh a_2},\\
 && \pois{\theta}{a_2}=-2\,z\,\left(\frac{1}{\cosh a_2}- \cosh \theta \right),\label{phstime}\\
&& \pois{a_1}{a_2}=-2\,z\,\tanh a_2.\label{hyp2}
\eea
The associated Poisson homogeneous space with respect to the Poisson subgroup $H=\langle P_1\rangle\subset \mathrm{SL}(2,\R)$ is the two-sheeted hyperboloid
 $\mathbb H^2\times \Z_2=\mathrm{SL}(2,\R)/\langle P_1\rangle$. It is parametrised by the coordinates $a_2,\theta$, which generate a Poisson subalgebra of $C^\infty(\mathrm{SL}(2,\R))$ with the Poisson bracket~\eqref{phstime}.
In terms of the coordinates $(a_+,a_-,\chi)$ from  \eqref{kkb} the Sklyanin bracket on $\mathrm{SL}(2,\R)$ takes the form
\bea
&& \{a_+,a_-\}=-2\,z\,a_-\,(1 + a_+\,a_-)-2\,z\,a_+,\label{hyp2a}\\
&& \{\chi,a_+\}=-z\, (1-e^{2\chi}) +z\,a_+^2\,e^{-2\chi},\\
&& \{\chi,a_-\}=-z(1-e^{-2\chi})- z\,a_-^2,
\label{par2}
\eea
but there is no simple parametrisation of the homogeneous space $\mathbb H^2\times \Z_2=\mathrm{SL}(2,\R)/\langle  J_+-J_-\rangle$  in terms of the coordinates  $(a_+,a_-,\chi)$. Nevertheless,  as we will see  in the next subsection, this bracket gives a simple description of the {\em coisotropic} Poisson homogeneous structures on the lightcone $L$.


\subsection{The Poisson homogeneous space $L$  (parabolic, $\gothh=\mathrm{Span}_\R \{J_+\}$)}

Finally, we analyse the {coisotropic} Poisson homogeneous spaces for the Poisson-Lie structure on $\mathrm{SL}(2,\R)$ given by the 
nonstandard or parabolic  $r$-matrix~\eqref{parabollic}.  In this case the cocommutator of $\mathfrak{sl}(2,\R)$ reads
\begin{align}
\delta(J_{12})=z\,J_{12} \wedge (P_1+ P_2) ,\qquad
\delta(P_1)=z\,P_1\wedge P_2 ,\qquad
\delta(P_2)=z\,P_2\wedge P_1,
\end{align}
and a one-dimensional Lie subalgebra $\gothh\subset\mathfrak{sl}(2,\R)$ is a sub-{\em Lie bialgebra} of $\mathfrak{sl}(2,\R)$ if and only if it is generated by the primitive element $P_1+P_2= J_+$.  In terms of the basis $\{J_\pm, J_3\}$ and its dual  basis $\{a_\pm,\chi\}$, the associated Lagrangian subalgebra  is
$\mathfrak l=\text{Span}_\R\{ J_+, \chi,a_-\}$ with the Lie bracket
\begin{align}
[J_+,\chi]=  - a_-,
\qquad
[J_+,a_-]=  0,
\qquad
[\chi,a_-]=  0.
\end{align}
In terms of the coordinates $(a_+,a_-,\chi)$ from  \eqref{kkb} the Sklyanin bracket on $\mathrm{SL}(2,\R)$ for \eqref{parabollic} reads
\bea
&& \{a_+,a_-\}=-\,a_-\,(1 + a_+\,a_-),\label{hyp3}\\
&& \{\chi,a_+\}=-\tfrac12\, (1-e^{2\chi}),\\
&& \{\chi,a_-\}=-\tfrac 1 2 \,a_-^2. \label{conephs}
\eea
The  Poisson homogeneous space with respect to the Lie subgroup $H=\langle J_+\rangle\subset \mathrm{SL}(2,\R)$ is the lightcone in 3d Minkowski space $L=\mathrm{SL}(2,\R)/\langle J_+\rangle$. It is parametrised by the coordinates $a_-,\chi$, which generate a Poisson subalgebra of $C^\infty(\mathrm{SL}(2,\R))$ with the Poisson bracket~\eqref{conephs}.
 In terms of the  coordinates $(\theta, a_1,a_2)$ from \eqref{eq:coordck} the Sklyanin bracket for \eqref{parabollic} is given by
\bea
 && \pois{\theta}{a_1}= \frac{1}{\cosh a_2} (e^\theta-\cos a_1),\\
 && \pois{\theta}{a_2}=e^\theta-  \frac{1}{\cosh a_2},\label{nonstandardCK}\\
&& \pois{a_1}{a_2}=\sin a_1 - \tanh a_2.\label{nonstandardAdS}
\eea
Again, there is no simple  way to parametrise the Poisson homogeneous space $L=\mathrm{SL}(2,\R)/\langle P_1+P_2\rangle$ in terms of the coordinates $(\theta, a_1,a_2)$. Nevertheless, these brackets  yield a simple description of the {\em coisotropic} 
Poisson homogeneous structures on the two-sheeted hyperboloid.

\subsection{{The full list of coisotropic} Poisson homogeneous spaces}

The discussion in the preceding sections shows that for each of the three inequivalent 2d homogeneous spaces  $M=\mathrm{SL}(2,\R)/H$, there 
is exactly one Poisson-Lie structure on $\mathrm{SL}(2,\R)$ for which the subgroup $H\subset \mathrm{SL}(2,\R)$ is a {\em Poisson-Lie subgroup}.
For the one-sheeted hyperboloid  $M=\mathrm{AdS}_2$ with $H=\langle J_{12}\rangle=\langle J_3\rangle$ it is given by the hyperbolic $r$-matrix \eqref{standard}, for the two-sheeted hyperboloid $\mathbb H_2\times\mathbb Z_2$ with $H=\langle P_1\rangle$ by the elliptic $r$-matrix \eqref{elliptic} and for the lightcone $L$  with $H=\langle J_+\rangle$ by the parabolic $r$-matrix \eqref{parabollic}.   However, each of these homogeneous spaces can be realised as a {\em coisotropic} Poisson homogeneous space over $\mathrm{SL}(2,\R)$ for any of these three Poisson-Lie structures.

The Poisson homogeneous space $M=\mathrm{AdS}_2$ with $H=\langle J_{12}\rangle=\langle J_3\rangle$ has a simple parametrisation in both, the coordinates $(\theta, a_1,a_2)$ from \eqref{eq:coordck} and $(a_+,a_-,\chi)$ from \eqref{kkb}. In terms of the former, it is parametrised by the coordinates
$a_1,a_2$ and in terms of the latter by the coordinates $a_+,a_-$. Expressions \eqref{phsj12}, \eqref{phsj12pm},  \eqref{hyp2}, \eqref{hyp2a}, \eqref{hyp3} and \eqref{nonstandardAdS} show that for all three $r$-matrices, these coordinate functions generate a Poisson subalgebra of $C^\infty(\mathrm{SL}(2,\R))$, but the concrete form of the Poisson brackets depends on  the choice of the classical $r$-matrix.

The Poisson homogeneous space $M=\mathbb H^2\times\mathbb Z_2$ with $H=\langle P_1\rangle$ has a simple parametrisation only in terms of the coordinates  $(\theta, a_1,a_2)$ from \eqref{eq:coordck}, where it is parametrised by  the coordinates $a_2,\theta$. Again, it is apparent in expressions \eqref{ell1}, \eqref{phstime} and \eqref{nonstandardCK} that for all choices of the $r$-matrix, these coordinates generate a Poisson subalgebra of $C^\infty(\mathrm{SL}(2,\R))$. 

In contrast, the Poisson homogeneous space $M=L$ with $H=\langle J_+\rangle$ has a simple parametrisation only in the coordinates $(a_+,a_-,\chi)$ from \eqref{kkb}, where it is parametrised by $a_-,\chi$. The expressions for their Poisson brackets in \eqref{par1}, \eqref{par2} and \eqref{conephs} for the three different $r$-matrices again show that in all three cases, these coordinate functions generate a Poisson subalgebra of $C^\infty(\mathrm{SL}(2,\R))$. 

These results are summarised in  Table 1. The {coisotropic}  Poisson homogeneous structures of the Poisson subgroup type appear in the diagonal, while all off-diagonal PHS {are  coisotropic, but not Poisson subgroup ones}. The first row contains the three {coisotropic} Poisson homogeneous structures for the homogeneous space $\mathrm{AdS}_2$, the first of them being the Poisson subgroup one. Note that this is the only Poisson bracket whose linearisation vanishes. The second  and third rows contain the
{coisotropic}  Poisson homogeneous structures for the homogeneous spaces $\mathbb H^2\times\Z_2$ and for the lightcone $L$, respectively. 
Again, the only {coisotropic}  Poisson homogeneous structures  whose linearisation vanishes are the Poisson subgroup ones. 
This fact already indicates that the Poisson subgroup structures are algebraically simpler, since the first order quantisation of the Poisson homogeneous structures leads to an {\em abelian} quantum spacetime.

\begin{table}[t] \label{tab1}{\scriptsize
\begin{tabular} {|c|c|c|c|}
 \hline
 & & &  \\[-1.5ex]
  & $r=2\,\eta\,P_1\wedge P_2$ &   $r=2\, z\,J_{12}\wedge P_2$ & \!\!\!\!  $r=J_{12}\wedge (P_1 + P_2)=\tfrac 1 2 J_3\wedge J_+$\!\!\!\! 
 \\[+1.5ex]
 \hline
 & & &  \\[+0.5ex]
   $M=SO(2,1)/\langle J_{12}\rangle$ & \!\!\!$\pois{a_1}{a_2}=2\,\eta\,\left(\frac{1}{\cosh a_2}- \cos a_1 \right)$\!\!\! & $ \pois{a_1}{a_2}=-2\,z\,\tanh a_2$ & $\pois{a_1}{a_2}=\sin a_1 - \tanh a_2$ \\[-0.5ex]
(one-sheeted hyperboloid $\mathrm{AdS}_2$) & & &  \\[+1.5ex]
 \hline
 & & &  \\[+0.5ex]
   $M=SO(2,1)/\langle P_1\rangle$ & $ \pois{\theta}{a_2}=-2\,\eta\,\tanh a_2$ & \!\!\!\! $\pois{\theta}{a_2}=-2\,z\,\left(\frac{1}{\cosh a_2}- \cosh \theta \right)$\!\!\!\!  & $\pois{\theta}{a_2}=e^\theta-  \frac{1}{\cosh a_2}$ \\[-0.5ex]
\!\!\!\!(two-sheeted hyperboloid $\mathbb H^2\times \Z_2$)\!\!\!\! & & &  \\[+1.5ex]
 \hline
& & &  \\[+0.5ex]
    $M=SO(2,1)/\langle J_+\rangle$&   $\{\chi,a_-\}=-\eta\, a_-$ & $\{\chi,a_-\}=-z(1-e^{-2\chi})- z\,a_-^2$ & $\{\chi,a_-\}=-\tfrac 1 2\,a_-^2$ \\[-0.5ex]
 (light cone $L$) & & &  \\[+1.5ex]
\hline
 \end{tabular}
\hfill}
\begin{caption}
{2d Poisson homogeneous spaces for the three families of Poisson-Lie structures on $\mathrm{SL}(2,\R)$.}
\end{caption}
\end{table}

It should be noted that the results in this section do not classify all coisotropic Poisson homogeneous structures on $\mathrm{AdS}_2$, $\mathbb H^2\times\mathbb Z^2$ and $\mathrm{L}$, since we consider only those Poisson homogeneous structures on $\mathrm{PSL}(2,\mathbb R)/H$ for which the Lagrangian Lie subalgebra $\mathfrak l\subset D(\mathfrak{sl}(2,\R))$ associated with the point $eH\in \mathrm{SL}(2,\mathbb R)$ is coisotropic.  A full classification of all Poisson homogeneous structures on $\mathbb H^2$ with respect to all Poisson-Lie group structures on $\mathrm{PSL}(2,\mathbb R)$ is given in  \cite{Leit}. It is shown there  that there is a one-parameter 
 family of Poisson homogeneous structures on $\mathbb H^2$, partly of
coisotropic type. This shows  that  on a single homogeneous space $G/H$ there
may be a continuum of Poisson homogeneous structures with respect to a fixed Poisson-Lie  structure on $G$.   
 
 The $\mathrm{PSL}(2,\mathbb R)$-orbits in the variety of Lagrangian Lie-subalgebras of $D(\mathfrak{sl}(2,\mathbb R))$ for the different Lie bialgebra structures on $\mathfrak{sl}(2,\mathbb R)$  are classified in \cite{Ciccoli}. Although it is difficult to transform the results in \cite{Ciccoli} into an explicit description in terms of  coordinates on $\mathrm{AdS}_2$, $\mathbb H^2\times\mathbb Z^2$ and $\mathrm{L}$, this amounts to a complete classification of all Poisson-homogeneous structures on the homogeneous spaces $\mathrm{PSL}(2,\mathbb R)/H$ for all Poisson-Lie structures on $\mathrm{PSL}(2,\mathbb R)$ and Lie subgroups $H\subset \mathrm{PSL}(2,\mathbb R)$.

\section{$\mathrm{AdS}_3$ as a {coisotropic} Poisson homogeneous space over a  double}
\label{sec:cc3dgr}

In this section, we consider the Lorentzian 3d constant curvature spacetimes $X_\Lambda$  as  homogeneous spaces over their isometry groups.  These are 3d anti de Sitter space $\mathrm{AdS}_3$,  de Sitter space $\mathrm{dS}_3$ and  Minkowski space $M_3$ for, respectively, negative, positive and vanishing cosmological constant $\Lambda$. Their isometry groups $G_\Lambda$ are the groups
$\mathrm{SO}(2,2)\cong \mathrm{SL}(2,\R)\times \mathrm{SL}(2,\R)$, $\mathrm{SL}(2,\mathbb C)$ and $\mathrm{ISO}(2,1)$. All of these groups contain the 3d Lorentz group $\mathrm{SL}(2,\R)\cong\mathrm{SO}(2,1)$ as a subgroup, with the diagonal embedding in the first case and the canonical inclusions in the last two cases,  and the spacetimes are given as homogeneous spaces $X_\Lambda=G_\Lambda/\mathrm{SL}(2,\R)$. 

These spacetimes play a distinguished role in 3d gravity since they capture all {\em local} information about 3d spacetimes. Any  Lorentzian vacuum spacetime or spacetime with point particles  is locally isometric to one of these spacetimes and maximally globally hyperbolic spacetimes are obtained as quotients of certain regions in these spacetimes by discrete subgroups of their isometry groups \cite{mess,bb}. 
Moreover, the quantum group symmetries in  quantum gravity arise naturally  from Poisson-Lie symmetries on their isometry groups \cite{FR,AM}, which act on the classical phase space of 3d gravity.
 
Hence it is natural to ask  if it is possible to realise the Lorentzian constant curvature spacetimes in 3d gravity  as  {\em Poisson} homogeneous spaces with respect to a suitable Poisson-Lie structure on their isometry groups $G_\Lambda$.  As shown in \cite{cm2,cmn} the relevant Poisson-Lie structures for 3d gravity are classical doubles.  Hence, we will restrict attention to  quasitriangular  and, in particular,   classical double Poisson-Lie structures in the following.  {As in the previous section, we will focus on the Poisson homogeneous spaces that are {\em coisotropic} and, in particular,  the {\em coisotropic Poisson-subgroup} ones. Although there may be many more Poisson homogeneous structures on  these constant curvature spacetimes, it can be expected that the coisotropic ones are the simplest to quantise.  Also, it turns out  that the coisotropic cases are quite numerous.} We will also focus on the case of negative cosmological constant, i.~e.~3d anti de Sitter space. This is motivated on one hand by its particular geometrical relevance - 3d anti de Sitter spacetimes  exhibit interesting physical properties such as   black hole solutions and conformal structures at their boundaries \cite{carlip}.  On the other hand, there are very few examples of quantum homogeneous anti de Sitter spaces. We expect that the results of this section generalise to 3d de Sitter and Minkowski space, although the concrete description of the {coisotropic}  Poisson homogeneous structures will have to rely on a different set of coordinates.

\subsection{Lorentzian 3d constant curvature spacetimes as  Poisson homogeneous spaces}
\label{subsec:colg}

The  Lie algebras $\gothg_\Lambda=\mathrm{Lie}\, G_\Lambda$ associated with the isometry groups of 3d gravity  can again be described in a unified way, in terms of a  a basis that is adapted to the 3d Cayley-Klein geometries.  It consists of generators  $\{J,P_0,P_1,P_2,K_1,K_2\}$ with the Lie bracket
\be
\begin{array}{lll} 
[J,P_i]=   \epsilon_{ij}P_j , &\qquad
[J,K_i]=   \epsilon_{ij}K_j , &\qquad  [J,P_0]= 0  , \\[2pt]
[P_i,K_j]=-\delta_{ij}P_0 ,&\qquad [P_0,K_i]=-P_i ,&\qquad
[K_1,K_2]= -J   \label{crules}, \\[2pt]
[P_0,P_i]=\kk K_i ,&\qquad [P_1,P_2]= -\kk J  ,
\end{array}
\ee
where $i,j=1,2$ and $\epsilon_{12}=-\epsilon_{21}=1$.  Depending on the cosmological constant $\Lambda=-\omega$, this yields the Lie algebras $\mathfrak{so}(2,2)\cong \mathrm{sl}(2,\R)\oplus \mathfrak{sl}(2,\R)$ for $\Lambda<0$, the Lie algebra $\mathfrak{so}(3,1)=\mathfrak{sl}(2,\mathbb C)$ for $\Lambda>0$ and the Lie algebra $\mathfrak{iso}(2,1)$ for $\Lambda=0$. In all cases,   $J$ generates a spatial rotation,  $P_0$ a time translation, the elements $K_i$ Lorentz boosts and the elements $P_i$ spatial translations. The Lie algebra $\gothg_\Lambda$ has two quadratic Casimir elements 
\be
{\cal C}=P_0^2-P^2+\kk(J^2-K^2), \qquad
{\cal W}=-JP_0+K_1P_2-K_2P_1  .
\label{bc}
\ee
The Casimir $\cal C$ corresponds to  the Killing--Cartan form and it is related to the energy of the particle, while $\cal W$ is related to  the Pauli-Lubanski vector.
Note that the basis $\{J,K_1,K_2,P_0,P_1,P_2\}$ is related to the basis $\{J_0,J_1,J_2, P_0,P_1,P_2\}$ considered in \cite{BHMNsigma, BHMplb1, BHMcqg, BHMplb2}
via a very simple change of basis \cite{BHMNsigma}
\begin{align}\label{eq:bchange}
&J=J_0 & &K_1=J_2 & &K_2=-J_1 & &P_i=P_i\;\qquad \text{for}\;i=0,1,2.
\end{align}
Our first aim is to determine all families of quasitriangular  Poisson-Lie structures on $G_\Lambda$ that are defined for all values of $\Lambda$ and for which the 3d Lorentz group   $\mathrm{SL}(2,\R)\subset G_\Lambda$ is a {\em Poisson subgroup}. This issue can be addressed in full generality  since
any such family of classical $r$-matrices for $r\in \gothg_\Lambda\oo\gothg_\Lambda$ must in particular determine a classical $r$-matrix  with this property for $\mathfrak{so}(2,2)$. The 
  classical $r$-matrices for  Lie algebra $\mathfrak{so}(2,2)$ were classified in~\cite{Tallin}. As any
 antisymmetric element of  $r\in \mathfrak{so}(2,2)\otimes \mathfrak{so}(2,2)$ can be parametrised as
\begin{eqnarray}
r\!&\!=\!&\!a_1 J \wedge P_1 + a_2 J\wedge K_1 + a_3 P_0\wedge P_1 + a_4 P_0 \wedge K_1 +a_5 P_1\wedge K_1 +a_6 P_1\wedge K_2 \cr 
 &&+\ b_1 J\wedge P_2 + b_2 J\wedge K_2 + b_3 P_0\wedge P_2 + b_4 P_0\wedge K_2 +b_5 P_2\wedge K_2 + b_6 P_2\wedge K_1 \label{genericr}\\ 
 &&+\  c_1 J\wedge P_0 + c_2 K_1\wedge K_2 + c_3 P_1\wedge P_2 ,
\nonumber
\end{eqnarray}
with 15 real parameters $a_1,...,a_6, b_1,...,b_6, c_1,...,c_3$, it is possible to explicitly derive the constraints on these parameters
that arise from the condition that $r$ is a solution of the modified classical Yang-Baxter equation (mCYBE). This was achieved in \cite{Tallin}. The conditions on these parameters that guarantee that the Lie algebra  $\mathfrak{sl}(2,\R)=\mathrm{Span}_\R\{J,K_1,K_2\}$ is a sub-{\em Lie bialgebra} of $\mathfrak{so}(2,2)$ can be derived by a straightforward computation from the  cocommutator defined by~\eqref{genericr}. On the Lie subalgebra $\mathfrak{sl}(2,\R)$, this cocommutator reads
\begin{eqnarray}
&&\back\mback  \delta(J) = a_1 J\wedge  P_2+ a_2 J\wedge  K_2 + a_3 P_0\wedge  P_2+ a_4 P_0\wedge  K_2 + (b_5 -a_5) (  K_1\wedge  P_2 - P_1\wedge  K_2) \nonumber\\
&&\back  +(a_6+b_6)( K_1\wedge  P_1 + P_2\wedge  K_2 ) + b_1 P_1\wedge  J + b_2K_1\wedge  J +  b_3 P_1\wedge  P_0 + b_4 K_1\wedge  P_0  ,\nonumber\\[0.1cm]
&&\back\mback  \delta(K_1)=  c_2  J\wedge K_1 + (a_1 + b_4) ( J\wedge P_0 + P_1\wedge K_2 ) + (a_6+ c_1) (J\wedge P_1+ P_0\wedge K_2 ) +  b_5   J\wedge P_2 \nonumber\\
&&\back + a_2 K_1\wedge K_2 +  a_5  P_0\wedge K_1 + c_3   P_0\wedge P_2+ a_4 P_1\wedge K_1+ b_3 P_1\wedge P_2 + b_1 P_2\wedge K_2  ,
\nonumber\\[0.1cm]\
&&\back \mback \delta(K_2)=  c_2   J\wedge K_2 + b_2 K_1\wedge K_2 + a_1 K_1\wedge P_1 + (a_4 -b_1)(  P_0\wedge J + P_2\wedge K_1 ) + b_5 P_0\wedge K_2  \nonumber \\
&&\back  +  ( b_6  - c_1) (P_0\wedge K_1+P_2\wedge J )+ a_5   P_1\wedge J +  c_3  P_1\wedge P_0 + b_4 P_2\wedge K_2 +a_3 P_2\wedge P_1,  \label{cocogen} 
\end{eqnarray}
and the condition $\delta(\mathfrak{sl}(2,\R))\subset \mathfrak{sl}(2,\R)\wedge \mathfrak{sl}(2,\R)$  is satisfied if and only if 
\be
a_1=a_3=a_4=a_5=b_1=b_3=b_4=b_5=c_3=0,
\qquad\qquad
b_6=c_1=-a_6.
\ee
If these equations hold, the mCYBE reduces to a single additional condition, namely
\be
a_2^2+b_2^2-c_2^2 + 4\,\Lambda\,a_6^2=0.
\label{constraint}
\ee
Therefore, the most general quasitriangular Poisson-Lie structure on $\mathrm{SO}(2,2)$ for which the 3d Lorentz group $\mathrm{SL}(2,\R)\cong \mathrm{SO}(2,1)$ is a Poisson-Lie subgroup is given by the $r$-matrix
\be 
r=a_2 J\wedge K_1 + b_2 J\wedge K_2 + c_2 K_1\wedge K_2 + a_6 \,(-J\wedge P_0  + P_1\wedge K_2 +  K_1\wedge P_2),
\label{genericpsc}
\ee
together with the condition~\eqref{constraint}. 
Note that for $a_6=0$ we obtain the $r$-matrix
\be
r=a_2 J\wedge K_1 + b_2 J\wedge K_2 + c_2 K_1\wedge K_2
\qquad\text{with}\qquad
a_2^2+b_2^2=c_2^2,
\label{carrier}
\ee
which is a solution of the classical Yang-Baxter equation. We thus obtain a {\em triangular} Poisson-Lie structure which is just the extension of the non-standard structure for $\mathfrak{sl}(2,\R)$ defined by~\eqref{parabollic}. 
Therefore, the only standard  solutions of the mCYBE have $a_6\neq 0$.
In particular, the family of  $r$-matrices  studied in~\cite{BHMcqg,BHMplb2}  that describes the Lie algebras $\gothg_\Lambda$ as classical  doubles of the Lie bialgebra $\mathfrak{sl}(2,\R)$ for all values of $\Lambda$ and is relevant for  3d gravity
satisfies these conditions. They are obtained by taking $b_2=c_2=0$, $a_2^2=\eta^2=-\Lambda$ and $a_6=-1/2$ in \eqref{carrier}, which yields
\be \label{eq:r1}
r=\eta\, J\wedge K_1  -\tfrac12 \,(-J\wedge P_0  - K_2\wedge P_1 +  K_1\wedge P_2)
\ee
and coincides with the family of  $r$-matrices in~\cite{BHMcqg,BHMplb2} up to the change of basis in \eqref{eq:bchange}.
Note also that the  parameter  $b_2$ can be taken to vanish without loss of generality due to the  classification of 
$r$-matrices for 
$\mathfrak{sl}(2,\R)$ discussed  in the previous section. The {coisotropic} Poisson homogeneous spaces for  this $r$-matrix will be discussed for the anti de Sitter case in the next subsection, where we show that they are an example of the double {coisotropic}  Poisson homogeneous spaces described in Section \ref{subsec:doubledouble}.

The general expression \eqref{genericr}
 for a classical $r$-matrix  also shows that the coisotropic subgroup condition is much weaker than the Poisson subgroup condition  and  that there are many ways of realising  $\mathrm{AdS}_3$ as a {\em coisotropic} Poisson homogeneous space over   $\mathrm{SO}(2,2)$  with a quasitriangular Poisson-Lie structure. In fact, by a direct inspection of~\eqref{cocogen} one finds that the {condition} $\delta(\mathfrak{sl}(2,\R))\subset \mathfrak{sl}(2,\R)\wedge \mathfrak{so}(2,2)$  leads to the conditions
\be
a_3=b_3=c_3=0,
\ee
together with the 18   quadratic constraints equations arising from the mCYBE \cite{Tallin}.
The set of quasitriangular  Poisson-Lie structures for which the group $\mathrm{SL}(2,\R)$ is only required to be {coisotropic therefore depends on 12 parameters, subject to  18 quadratic constraint equations}, and some of the solutions of these equations will lead to equivalent Poisson-Lie structures via automorphisms. 

An important example that arises from a classical $r$-matrix that gives the Lie algebra $\gothg_\Lambda$ the structure of a classical double is the 
 twisted (`space-like') $\kappa$-AdS Poisson Lie group from~\cite{BHMNsigma}  given by 
\be
r
=\tfrac{1}{2} (
-K_2 \wedge P_0 - J \wedge P_1) + \tfrac{1}{2}\,{\xi\,K_1\wedge P_2 }.
\label{tconp}
\ee
It is related to the family of $r$-matrices from \cite{BHMNsigma} by the change of basis \eqref{eq:bchange} and corresponds to the case where  the only non-vanishing parameters  in \eqref{genericr} are given by
\be
a_1=-b_4=-\frac12,
\qquad
b_6=-\frac{\xi}{2}.
\ee
Note that the associated Lie bialgebra structure coincides with the one of the usual $\kappa$-AdS Poisson-Lie group from \cite{BHOS3d,Starodutsev}  for $\xi=0$.
As  this is also a classical double $r$-matrix for the isometry group $\mathrm{SO}(2,2)$ \cite{BHMcqg} and due to the relevance of the $\kappa$-deformation, we will also  construct the corresponding {coisotropic}  Poisson homogeneous spaces and compare them to the one for the first family of $r$-matrices.

\subsection{$\mathrm{AdS}_3$ as a Poisson subgroup Poisson homogeneous space over a double}
\label{sec:3dgrav}
 In this and the following subsection we  realise  3d anti de Sitter space and via a limiting procedure also 3d Minkowski space as
 a Poisson homogeneous space over a double of a classical double. It can be expected that this will generalise straightforwardly also to 3d de Sitter space, but the concrete parametrisation of the Poisson structure in terms of coordinates will take a different form. 
 
We start  by considering the associated Lie bialgebra structures on $\mathfrak{so}(2,2)$.
As shown in~\cite{BHMcqg}, there  are two inequivalent families of classical $r$-matrices for $\mathfrak{so}(2,2)$ that give $\mathfrak{so}(2,2)$ the structure of a classical double $\mathfrak{so}(2,2)=D(\mathfrak{a})$ for a 3d  Lie bialgebra  $\mathfrak{a}$.
The first family is associated with the standard quantum deformation of $\mathfrak{a}=\mathfrak{sl}(2,\mathbb{R})$ from \eqref{standard}.
In terms of the standard basis $\{X_0,X_1, X_2\}$ of $\mathfrak{sl}(2,\R)$ and the associated dual basis $\{x^0,x^1,x^2\}$  of $\mathfrak{sl}(2,\R)^*$, the Lie brackets of the classical double read \cite{BHMcqg}
\begin{align}
&[X_0,X_1]= 2\,X_1 ,
&
&[X_0,X_2]=  -2\,X_2 ,
&
&[X_1,X_2]= X_0, \nonumber\\
&[x^0,x^1]= -\tfrac 1 2 \eta\,x^1 ,
&
&[x^0,x^2]=-\tfrac 1 2 \eta\,x^2,
&
&[x^1,x^2]=0,\nonumber\\
&[x^0,X_0]=0,
&
&[x^0,X_1]=x^2+\tfrac 1 2 \eta\,X_1 ,
& 
&[x^0,X_2]=-x^1+\tfrac 1 2 \eta\,X_2, \\
&[x^1,X_0]=2 x^1,
&
&[x^1,X_1]=-2 x^0-\tfrac 1 2 \eta\,X_0,
&
&[x^1,X_2]=0, \nonumber \\
&[x^2,X_0]=- 2 x^2,
&
&[x^2,X_1]= 0,
&
&[x^2,X_2]=2 x^0-\tfrac 1 2 \eta\,X_0,\nonumber
\end{align}
were $\Lambda=-\eta^2<0$. 
The basis $\{X^i,x_i\}$ is related to the basis $\{J,P_0,P_1,P_2,K_1,K_2\}$ introduced before \eqref{crules} by the basis transformation
\begin{align}
&J=-\tfrac12 (X_1 -X_2), & 
&K_1=\tfrac12 (X_1 +X_2)   ,&  
& K_2=-\tfrac12 X_0 ,\label{csbasis6}\\
&P_0=-\tfrac 1 2 \eta  (X_1 +X_2) +(x^1 - x^2),& 
&P_1=2 x^0,&  
&P_2=\tfrac 1 2 \eta  (X_1 -X_2) +(x^1 + x^2).
\nonumber
\end{align}
A direct computation shows that inserting these expressions into the Lie brackets above yields indeed the Lie bracket \eqref{crules} for the Lie algebra $\mathfrak{so}(2,2)$  with $\omega=\eta^2$ and transforms the canonical skew-symmetric $r$-matrix $r=\sum_i{x^i\wedge X_i}$ for $D(\mathfrak{sl}(2,\R))$ into the $r$-matrix \eqref{eq:r1}. 
The associated cocommutator reads
\begin{align}\label{cocomm}
&\delta(  J)=-\m     K_2\wedge   J,\qquad \delta(  K_2)=0,\qquad \delta(  K_1)=-\m    K_2\wedge   K_1 , &&\nonumber\\
&\delta(  P_0)= \left(    P_1\wedge  P_2 +\m     P_1\wedge   J-\m^2  K_1\wedge   K_2 \right) ,& &\\
&\delta(  P_1)= \left(       P_0\wedge   P_2+\m       P_0\wedge   J - \m    P_2\wedge  K_1+\m^2      K_1\wedge   J\right) ,& &\nonumber\\
&\delta(  P_2)= \left(        P_1\wedge  P_0   +\m       P_1\wedge   K_1 -\m^2       J\wedge  K_2\right) ,& &\nonumber
\end{align}
and this shows that the Lie algebra $\mathfrak{sl}(2,\R)=\mathrm{Span}\{J,K_1,K_2\}$ defines a {coisotropic}  Lagrangian Lie subalgebra  $\mathfrak l\subset D(\mathfrak{so}(2,2))=D(D(\mathfrak{sl}(2,\R)))$ of Poisson subgroup type. This is  consistent with the fact  that the classical $r$-matrix~\eqref{eq:r1} is a special case of the more general family of  $r$-matrices in~\eqref{carrier} that satisfy this condition. The Lagrangian Lie subalgebra $\mathfrak l\subset D(\mathfrak{so}(2,2))$ is given as $\mathfrak l=\mathrm{Span}_\R\{J,K_1,K_2, p^0,p^1,p^2\}$ with the Lie bracket
\begin{align}
[J, K_2]&=-K_1\,, & [J,K_1]&=K_2\,, & [K_1,K_2]&=-J\,,  \nonumber\\
[p^0,p^1]&=\,-p^2, & [p^0,p^2]&=\,p^1 & [p^1,p^2]&=\,p^0,
\label{linear}\\
[p^0,J]&=0\,, & [p^0,K_2]&=p^2\,, & [p^0,K_1]&=p^1\,, \nonumber\\
[p^1,J]&=-p^2\,, & [p^1,K_2]&=0\,, & [p^1,K_1]&=p^0\,, \nonumber  \\
[p^2,J]&=p^1\,, & [p^2,K_2]&=p^0\,, & [p^2,K_1]&=0\,.\nonumber
\end{align}
The Poisson structure on the homogeneous space $\mathrm{AdS}_3=\mathrm{SO}(2,2)/\mathrm{SL}(2,\R)$ is obtained from the Sklyanin bracket on $\mathrm{SO}(2,2)$
for the classical $r$-matrix \eqref{eq:r1}. This Sklyanin bracket was computed in~\cite{BHMplb2}. In terms of the  $\mathrm{AdS}_3$ group coordinates $x_a\equiv p^a$  the associated Poisson structure on $\mathrm{AdS}_3$ takes the form ~\cite{BHMplb2}
\bea
&&   \{x_0,x_1\}  = -\frac{\tanh\m x_2 }{\m} \,\C(x_0,x_1)=- x_2 + o[\m] ,\,  \nonumber\\
&& \{x_0,x_2\} =  \frac{ \tanh\m x_1}{\m}\,\C(x_0,x_1)\, = x_1 + o[\m] ,\\
&&  \{x_1,x_2\} = \frac{\tan\m x_0}{\m}\,\C(x_0,x_1)\, =  x_0 
+ o[\m] ,
\nonumber
\eea
where the cosmological constant is given by $\Lambda=-\eta^2$ and the function $\C$ reads
\be
\C(x_0,x_1)\equiv\cos\m x_0(\cos\m x_0\cosh\m x_1+ \sinh\m x_1)\,.
\ee
Note that the linearisation of this Poisson bracket  is just the Lie bracket~\eqref{linear} and does not vanish, in contrast to the corresponding bracket $\mathrm{AdS}_2$ from Table 1.  In fact, it generates a Poisson algebra whose quantisation is by no means trivial, as discussed in~\cite{BHMplb2}. Note  that this Poisson structure is well-defined also in the flat limit $\eta\to 0$, i.~e.~the case of vanishing cosmological constant, and this describes the associated {coisotropic}  Poisson homogeneous structure on 3d Minkowski space. 

\subsection{$\mathrm{AdS}_3$ as a coisotropic Poisson homogeneous space over a double}
\label{subsec:coisot}

In this subsection we investigate the second family of classical $r$-matrices which give the Lie algebra $\mathfrak{so}(2,2)$ the structure of a classical double $\gothg=D(\mathfrak{a})$, where $\mathfrak a=\mathfrak{iso}(1,1)$ is  the Poincar\'e algebra  in two dimensions with a bialgebra structure that  depends on an essential deformation parameter $\eta\neq 0$. In terms of a basis $\{X_0,X_1,X_2\}$ of $\mathfrak{iso}(1,1)$ and the associated dual basis $\{x^0,x^1,x^2\}$ of $\mathfrak{iso}(1,1)^*$ the Lie bracket of $\gothg_\Lambda=D(\mathfrak{iso}(1,1))$ is given by~\cite{BHMcqg}
\begin{align}
&[X_0,X_1]= - X_2,
&
&[X_0,X_2]=  -X_1,
& 
&[X_1,X_2]=0,  \nonumber\\
&[x^0,x^1]= \eta\,x^1 ,
&
&[x^0,x^2]=\eta\,x^2,
&
&[x^1,x^2]=0,
\nonumber \\
&[x^0,X_0]=0,
&
&[x^0,X_1]=-\eta\,X_1 ,
&
&[x^0,X_2]=-\eta\,X_2, \label{zj}\\
&[x^1,X_0]=-x^2,
&
&[x^1,X_1]=\eta\,X_0,
&
&[x^1,X_2]=x^0, \nonumber\\
&[x^2,X_0]=-x^1,
&
&[x^2,X_1]= x^0,
&
&[x^2,X_2]=\eta\,X_0,
\nonumber
\end{align}
where $\Lambda=-\eta^2<0$ and $\eta>0$.
If we  perform the change of basis 
\begin{align}
&J=\frac{1}{\sqrt{2\eta}}(X_2 - x^1) ,&  
&K_1=- \frac 1 \eta x^0, & 
&K_2=-\frac{1}{\sqrt{2\eta}} (X_2 + x^1)  ,\nonumber\\
&P_0=\sqrt{ \frac{\eta}{{2}}} (X_1 - x^2),&  
&P_1=\sqrt{\frac{\eta}{{2}}} (X_1 + x^2), &
& P_2=-\eta X_0,  \label{csbasis7}
\end{align}
we obtain the Lie algebra $\mathfrak{so}(2,2)$  with the bracket \eqref{crules}. 
A similar change can be employed in the case where $\eta<0$. In terms of the basis $\{J,K_1,K_2,P_0,P_1,P_2\}$ the
 antisymmetric canonical $r$-matrix of $D(\mathfrak{iso}(1,1))$ then takes the form \eqref{tconp} with $\xi=1$.
The associated quantum group  is a superposition of the (space-like) $\kappa$-deformation of $\mathfrak{so}(2,2)$ from \cite{BHOS3d} that is generated by  the term $(-K_2\wedge P_0 - J \wedge P_1)$ and a Reshetikhin twist  generated by $K_1\wedge P_2$.  
Anti de Sitter space $\mathrm{AdS}_3$ is again given as a {coisotropic}  Poisson homogeneous space  $\mathrm{AdS}_3=\mathrm{SO}(2,2)/\mathrm{SL}(2,\R)$ and the subgroup $\mathrm{SL}(2,\R)\subset \mathrm{SO}(2,2)$ is again generated by $J,K_1,K_2$. However, for this Lie bialgebra structure the cocommutator is given by
\begin{align}
\label{cocomm}
\delta(J)&=\tfrac{1}{2}(P_2\wedge J-P_2\wedge K_2+K_1\wedge P_0+P_1\wedge K_1)\,, \nonumber \\
\delta(K_2)&=\tfrac{1}{2}(P_2\wedge J-P_2\wedge K_2-K_1\wedge P_0-P_1\wedge K_1)\,, \nonumber \\
\delta(K_1)&=0\,, \nonumber \\
\delta(P_0)&=\tfrac{1}{2}\left(\eta ^2(K_1\wedge J-K_2\wedge K_1)+P_2\wedge (P_0+P_1)\right)\,,\\
\delta(P_1)&=\tfrac{1}{2}\left(\eta ^2(-K_1\wedge K_2+J\wedge K_1)+P_2\wedge (P_0+P_1)\right)\,, \nonumber \\
\delta(P_2)&=0\,. \nonumber
\end{align}
This shows that $\mathrm{sl}(2,\R)\subset \mathfrak{so}(2,2)$ is {a coisotropic Lagrangian Lie subalgebra, but not a sub-{\em Lie bialgebra}}, as discussed in Section \ref{subsec:colg}. The associated Lagrangian subalgebra of $D(\mathfrak{so}(2,2))=D(D(\mathfrak{iso}(1,1)))$ is again of the form $\mathfrak l=\mathrm{Span}_\R\{J,K_1,K_2,p^0,p^1,p^2\}$
with Lie bracket
\begin{align}
[J, K_2]&=-K_1\,, & [J,K_1]&=K_2\,, & [K_1,K_2]&=-J\,,  \nonumber\\
[p^0,p^1]&=0\,, & [p^0,p^2]&=-\tfrac{1}{2}(p^0+p^1)\, & [p^1,p^2]&=-\tfrac{1}{2}(p^0+p^1)\,, 
 \label{linkap}\\
[p^0,J]&=\tfrac{1}{2}K_1\,, & [p^0,K_2]&=p^2+\tfrac{1}{2}K_1\,, & [p^0,K_1]&=p^1\,, \nonumber \\
[p^1,J]&=-\left(p^2+\tfrac{1}{2}K_1\right)\,, & [p^1,K_2]&=-\tfrac{1}{2}K_1\,, & [p^1,K_1]&=p^0\,,\nonumber\\   
[p^2,J]&=\tfrac 1 2 (J-K_2) 
& [p^2,K_2]&=p^0+\tfrac{1}{2}(J-K_2)\,, & [p^2,K_1]&=0\,. \nonumber
\end{align}
Hence, we obtain a {coisotropic}  Lagrangian subalgebra $\mathfrak l$
 of the general form~\eqref{coiscr}, 
 but in contrast to the sub-Lie bialgebra case, it is {\em not} a semidirect product of the Lie algebras $\mathfrak l\cap \mathfrak{so}(2,2)=\mathfrak{sl}(2,\R)$
and the Lie algebra $\mathfrak l\cap \mathfrak{so}(2,2)^*$.

The Poisson bracket on the homogeneous space $\mathrm{AdS}_3=\mathrm{SO}(2,2)/\mathrm{SL}(2,\R)$ is again obtained from the Sklyanin bracket on $\mathrm{SO}(2,2)$,  in this case for  the classical $r$-matrix \eqref{tconp} for $\xi=1$. However, to identify  the contributions of each part of the $r$-matrix, it is convenient to consider the $r$-matrix \eqref{tconp} for general values of the parameter $\xi$, which controls the contribution of the twist $K_1\wedge P_2$. A convenient parametrisation of this Sklyanin bracket in suitable coordinates  on $\mathrm{SO}(2,2)$ was computed in~\cite{BHMplb2}. The projection of this bracket to the group coordinates $x_a\equiv p^a$ that parametrise $\mathrm{AdS}_3$  then gives the following bracket on $\mathrm{AdS}_3$
\begin{align}
\{x_0,x_1\}&={\frac{\xi}{2}\,\,\frac{\tanh \eta x_2}{\eta}\,\sech \eta x_1\left(\cos ^2\eta x_0\,\sinh ^2\eta x_1-\sin ^2\eta x_0\right)=  o[\m^2]} , \nonumber\\
 \{x_0,x_2\}&=-\tfrac{1}{2}\,\,
  \frac{\sin \eta x_0}{\eta}\,\cosh\eta x_1 + \frac{\sinh \eta x_1}{2\,\eta}\left(\sin\eta x_0\, \tanh \eta x_1 {- \xi\cos^2\eta x_0}
  \right)
 \nonumber 
= -\tfrac{1}{2}(x_0+{\xi\,x_1}) + o[\m^2] , \nonumber \\
\{x_1,x_2\}&=-\tfrac{1}{2}\,\,\frac{\sinh \eta x_1}{\eta}\,\cos\eta x_0
{-\frac{\xi}{2}\,\frac{\sin \eta x_0}{\eta}\,\cos\eta x_0\,\cosh\eta x_1}
= -\tfrac{1}{2}({\xi\,x_0}+x_1)  + o[\m^2] ,\label{eq:poissother} 
\end{align}
where again   $\Lambda=-\eta^2$.
The linearisation of this Poisson bracket reads
\begin{align}\label{eq:linb}
\{x_0,x_1\}&=0, & 
 \{x_0,x_2\}&= -\tfrac{1}{2}(x_0+{\xi\,x_1}) ,   &
\{x_1,x_2\}&=-\tfrac{1}{2}({\xi\,x_0}+x_1) , 
\end{align}
and coincides with the Lie bracket~\eqref{linkap} for $\xi=1$. This is the flat   limit  $\eta\to 0$ of the bracket \eqref{eq:poissother}, which describes the case of vanishing cosmological constant and hence the {coisotropic}  Poisson homogeneous space $M_3$.   In the untwisted case, i.~e.~for $\xi=0$,  the flat limit $\eta\to 0$ of the bracket \eqref{eq:poissother} yields precisely the well-known (2+1) $\kappa$-Minkowski spacetime~\cite{kMas, kMR, kZakr}, which is indeed a coisotropic Poisson homogeneous space. 



\section{Concluding remarks}

 This article presents the first steps towards the construction of  the {coisotropic} quantum  homogeneous spaces associated with 
the three two-dimensional homogeneous spaces over the group $\mathrm{SL}(2,\R)$, including in particular 2d hyperbolic space $\mathbb H^2$ and anti de Sitter space $\mathrm{AdS}_2$,  as well as three-dimensional anti de Sitter space $\mathrm{AdS}_3$ and its flat limit.
The explicit construction of the full quantum homogeneous spaces is challenging, since it requires an explicit description of the  Hopf algebra duality between the quantum algebra and the quantum group. A precise understanding of the corresponding {coisotropic} Poisson homogeneous spaces and  the associated {coisotropic} Lagrangian subalgebras of the classical doubles is required and can provide helpful guidance in this task. The former can be viewed as the semiclassical limit  of the associated quantum homogeneous space, the latter as its infinitesimal version or first order approximation in $\hbar$.
It contains all essential information on the associated quantum homogeneous space and in principle allows one to construct it via an iterative procedure, order for order in $\hbar$.

The two-dimensional examples studied in this article are {important ones} in their own right due to the rich geometry of these classical homogeneous spaces and as toy models for the higher dimensional case. 
On the other hand, the 3d examples here presented are motivated by their role in 3d gravity. In this setting, the quantum group symmetries of quantum homogeneous spaces arise from Poisson-Lie symmetries in the classical theory and the relevant Poisson-Lie symmetries have been identified in \cite{cm2,cmn} as classical doubles. While the case of vanishing cosmological constant is structurally simpler and well understood, 3d anti de Sitter and de Sitter space are of special interest, since they allow one to investigate both, the cosmological constant and Planck's constant as deformation parameters  and allow one to understand the role of curvature. 

It would also be interesting to explore in more detail the Minkowski and de Sitter counterparts of the results in this article, both from the point of view of Poisson-Lie group contractions~\cite{CGST2, BHOSpl,ref} and via the classification of the quasitriangular Poisson-Lie structures on the groups $\mathrm{ISO}(2,1)$ and $\mathrm{SL}(2,\mathbb C)$. In particular, this raises the question if the unified description of the isometry groups in 3d gravity in terms of (pseudo) quaternions from \cite{cm3} can be used to give unified coordinates for the associated homogeneous spaces in which the cosmological constant appears as a deformation parameter. In this context, it would also be worthwhile to explore the connections between Poisson homogeneous spaces and the dynamical Yang-Baxter equation~\cite{KS,Ludrm}. It was shown in \cite{cmt,bn,br} that gauge fixing in the context of 3d gravity is related to the introduction of an observer in the theory  and leads to the appearance of {\em dynamical} $r$-matrices.

Finally, the analysis in this article can be viewed as a first step towards an analogous description of 4d constant curvature spacetimes. 
Like their three-dimensional counterparts,  four dimensional de Sitter, anti de Sitter and   Minkowski space are obtained as homogeneous spaces of their isometry groups with the Lorentz group $\mathrm{SO}(3,1)$ as the stabiliser subgroup. Hence, a similar analysis could be performed 
 to determine  {the coisotropic} Poisson homogeneous structures on these four-dimensional spacetimes.
In particular, the classical $r$-matrices for $\mathfrak{so}(3,1)$ were classified in~\cite{Zakrlorentz,Tolstoy}, see also~\cite{BLT}, and it would be interesting to determine  which of them generates Poisson subgroup homogeneous spaces on four dimensional de Sitter space $\mathrm{AdS}_4$.
 On the other hand, it was recently shown in~\cite{BHNplb} 
  that the Lie algebra of the isometry group of $\mathrm{AdS}_4$ can be realised as a classical double and that the  associated canonical $r$-matrix is a four-dimensional generalisation of the twisted kappa deformation~\eqref{tconp}. This yields a coisotropic Poisson homogeneous structure on $\mathrm{AdS}_4$ that is a direct generalisation of the coisotropic Poisson homogeneous structure on $\mathrm{AdS}_3$  studied in this article.
  It would be interesting to analyse the four-dimensional situation in more depth and to obtain more relevant examples of {coisotropic} Poisson homogeneous structures.


\begin{appendix}

\section{Coordinate functions and vector fields for the 2d Cayley-Klein geometries}
\label{sec:appa}

The fundamental representation $D$
of the Cayley-Klein Lie algebra $\mathfrak g_{(\k_1,\k_2)}$ with $\k_1=1$ and $\k_2=-1$ is
given by (see~\cite{BHOSpl} for details)
\begin{align}
D(P_1)=\begin{pmatrix}
0&-1&0\\1&0&0\\0&0&0\end{pmatrix},\, D(P_2)=\begin{pmatrix}
0&0&1\\0&0&0\\1&0&0\end{pmatrix} , \, D(J_{12})=\begin{pmatrix}
0&0&0\\0&0&1\\0&1&0\end{pmatrix}.
\end{align}
A real 
representation $D$ of the  group element $g=e^{a_1 P_1}e^{a_2
P_2}e^{\theta J_{12}}\in G_{(\k_1,\k_2)}$ is thus given by 
\begin{align}\label{eq:coordck}
&D(g)=\\
&\begin{pmatrix}
\CC(a_1) \CC(a_2) &\!\! -\k_1 \SS (a_1) \CC(\theta) -  \k_1\k_2
\CC(a_1)
\SS(a_2)\SS(\theta)&\!\!\k_1\k_2\SS(a_1)\SS(\theta)- \k_1\k_2 
\CC(a_1) \SS(a_2)\CC(\theta) \\
\SS(a_1) \CC(a_2) & \CC(a_1) \CC(\theta) - \k_1\k_2
\SS(a_1)
\SS(a_2)\SS(\theta) & - \k_2 \CC(a_1)\SS(\theta)- \k_1\k_2 
\SS(a_1) \SS(a_2)\CC(\theta) \\
\SS(a_2) & \CC(a_2) \SS(\theta)  & 
\CC(a_2)\CC(\theta) 
\end{pmatrix},\nonumber
\end{align}
where we use the abbreviations
$\CC(a_1)\equiv \cos (a_1)$, 
$\CC(a_2)\equiv\cosh(a_2)$,  $\CC(\theta)\equiv\cosh(\theta)$, $\SS(a_1)\equiv \sin (a_1)$,
$\SS(a_2)\equiv\sinh(a_2)$ and $\SS(\theta)\equiv\sinh(\theta)$.
In terms of these coordinates $\theta$, $a_1$, $a_2$, the left and right invariant vector fields
 for $\mathrm{SL}(2,\R)$ associated with the basis $\{P_1,P_2,J_{12}\}$ of $\mathfrak{sl}(2,\R)$ are given by
\begin{align}\label{eq:ckvecs}
{J_{12}}^L&=\partial_\theta,\\
{P_1}^L&={1\over{\cosh (a_2)}}\left\{ -
\sinh (a_2)\cosh (\theta)  \partial_\theta + \cosh (\theta)
\partial_{a_1} + \cosh (a_2)\sinh (\theta)
\partial_{a_2}\right\},\nonumber\\
{P_2}^L&={1\over{\cosh (a_2)}}\left\{-
\sinh (a_2)\sinh (\theta)  \partial_\theta + \sinh (\theta)
\partial_{a_1} + \cosh (a_2)\cosh (\theta)
\partial_{a_2}\right\}, \nonumber
\end{align}
\begin{align}\label{eq:ckvecs2}
\! {J_{12}}^R &={1\over{\cosh (a_2)}}\left\{ 
\sinh (a_2)\cos(a_1) \partial_{a_1} + \cos(a_1) 
\partial_{\theta} +\!
\cosh (a_2)\sin(a_1) \partial_{a_2}\right\},\\
{P_1}^R &=\partial_{a_1},\nonumber\\
\!\!\!\! {P_2}^R &={1\over{\cosh (a_2)}}\left\{ -
\! \sinh (a_2)\sin(a_1)  \partial_{a_1} \! -
\sin(a_1) \partial_{\theta} + \! \cosh (a_2)\cos(a_1)
\partial_{a_2}\right\}.\nonumber
\end{align}
On the other hand,  the two-dimensional representation  of $\mathfrak{sl}(2,\mathbb R)\simeq so(2,1)$
\be
 J_3=\left(\begin{array}{cc}
1&0\\0&-1\end{array}\right) ,\qquad
 J_+=\left(\begin{array}{cc}
0&1\\0&0\end{array}\right) ,\qquad 
 J_-=\left(\begin{array}{cc}
0&0\\1&0\end{array}\right) ,
\ee
allows one to parametrise  elements of the group ${\rm SL}(2,\RR)$ near the unit element as
\be
T= \ea^{a_-J_-}\ea^{a_+J_+}\ea^{\chi J_3}= \left(\begin{array}{cc} \ea^{\chi} & a_+\,\ea^{-\chi}\\
a_-\,\ea^{\chi} & (1 + a_-\,a_+)\ea^{-\chi} \end{array}\right) \equiv \left(\begin{array}{cc} a & b\\
c & d \end{array}\right),\qquad a d- bc =1.
\label{kkb}
\ee
In terms of the local coordinates
 $a_+$, $a_-$, $\chi$, the left and right invariant vector fields of ${\rm SL}(2,\RR)$  associated with the basis $\{J_3,J_\pm\}$ of $\mathfrak{sl}(2,\R)$ are given by~\cite{BHMNsigma}
\bea\label{kkc}
&& Y_{J_+}^L=\ea^{2\chi}\,\partial_{a_+},\\  
&& Y_{J_-}^L=a_+^2\,\ea^{- 2\chi}\,\partial_{a_+}+ \ea^{-
2\chi}\,\partial_{a_-} + a_+\,\ea^{- 2\chi}\,\partial_{\chi},\nonumber\\ 
&&  Y_{J_3}^L=\partial_{\chi}, \nonumber\\
&& Y_{J_+}^R =(1+2 a_- a_+)\,\partial_{a_+}-  a_-^2
\,\partial_{a_-} +  a_-\,\partial_{\chi}, \label{kkc2}\\ 
&&  Y_{J_-}^R = \partial_{a_-}  , \nonumber\\ 
&&  Y_{J_3}^R =- 2\,a_-\,\partial_{a_-}+ 2 a_+\,
\partial_{a_+}+\partial_{\chi} .\nonumber
\nonumber
\eea
\end{appendix}


\section*{Acknowledgements}

This work was partially supported by the Spanish MINECO under grants MTM2013-43820-P {and MTM2016-79639-P  (AEI/FEDER, UE)}, by Junta de Castilla y Le\'on (Spain) under grants BU278U14 and VA057U16, {and by the Action MP1405 QSPACE from the European Cooperation in Science and Technology (COST).}
P.~Naranjo acknowledges a postdoctoral fellowship from Junta de Castilla y Le\'on. {The authors are indebted to the referee for comments improving the paper and eliminating mistakes in the presentation of the background material.}


{

\end{document}